\numberwithin{equation}{section}
\newcommand\OPT{\ensuremath{\mathrm{OPT}}\xspace}
\newcommand\TOP{\ensuremath{\mathrm{TOP}}\xspace}
\newcommand\MIN{\ensuremath{\mathrm{MIN}}\xspace}
\newcommand\BOT{\ensuremath{\mathrm{BOT}}\xspace}
\newcommand\CMP{\ensuremath{\mathrm{CMP}}\xspace}
\newcommand\CMPS{\ensuremath{\mathrm{CMP}(\mathcal{Q}^*)}\xspace}
\newcommand\STP{\ensuremath{\mathrm{ST}}\xspace}
\DeclarePairedDelimiter{\floor}{\lfloor}{\rfloor}
\newcommand{\myparagraph}[1]{\medskip\noindent{\sffamily\bfseries #1}~}
\begin{document}

\title{Multi-Level Steiner Trees}


\author{Reyan~Ahmed}
\affiliation{%
  \institution{University of Arizona}
  \city{Tucson}
  \state{AZ}
  \country{USA}}
\email{abureyanahmed@email.arizona.edu}

\author{Patrizio~Angelini}
\orcid{0000-0002-7602-1524}
\affiliation{%
  \institution{Universit\"at T\"ubingen}
  \city{T\"ubingen}
  \country{Germany}}

\author{Faryad~Darabi~Sahneh}
\orcid{0000-0003-2521-4331}
\affiliation{%
  \institution{University of Arizona}
  \city{Tucson}
  \state{AZ}
  \country{USA}}
\email{faryad@cs.arizona.edu}

\author{Alon~Efrat}
\affiliation{%
  \institution{University of Arizona}
  \city{Tucson}
  \state{AZ}
  \country{USA}}
\email{alon@cs.arizona.edu}

\author{David~Glickenstein}
\affiliation{%
  \institution{University of Arizona}
  \city{Tucson}
  \state{AZ}
  \country{USA}}
\email{glickenstein@math.arizona.edu}

\author{Martin~Gronemann}
\affiliation{%
  \institution{Universit\"at zu K\"oln}
  \city{Cologne}
  \country{Germany}}
\email{gronemann@informatik.uni-koeln.de}

\author{Niklas~Heinsohn}
\affiliation{%
  \institution{Universit\"at T\"ubingen}
  \city{T\"ubingen}
  \country{Germany}}
\email{heinsohn@informatik.uni-tuebingen.de}

\author{Stephen~G.~Kobourov}
\orcid{0000-0002-0477-2724}
\affiliation{%
  \institution{University of Arizona}
  \city{Tucson}
  \state{AZ}
  \country{USA}}
\email{kobourov@cs.arizona.edu}

\author{Richard~Spence}
\affiliation{%
  \institution{University of Arizona}
  \city{Tucson}
  \state{AZ}
  \country{USA}}
\email{rcspence@email.arizona.edu}

\author{Joseph~Watkins}
\affiliation{%
  \institution{University of Arizona}
  \city{Tucson}
  \state{AZ}
  \country{USA}}
\email{jwatkins@math.arizona.edu}

\author{Alexander~Wolff}
\orcid{0000-0001-5872-718X}
\affiliation{%
  \institution{Universit\"at W\"urzburg}
  \city{W\"urzburg}
  \country{Germany}}



\begin{abstract}
  In the classical Steiner tree problem, given an
  undirected, connected graph $G=(V,E)$ with non-negative edge costs 
  and a set of \emph{terminals} $T\subseteq V$, the objective is to find a minimum-cost tree $E' \subseteq E$ that spans the terminals.  The problem is APX-hard;
  the best known approximation 
  algorithm has a ratio of $\rho = \ln(4)+\varepsilon < 1.39$.
  In this paper, we study a natural generalization, the \emph{multi-level
    Steiner tree} (MLST) problem:
  given a nested sequence of
  terminals $T_{\ell} \subset \dots \subset T_1 \subseteq V$,
  compute nested trees
  $E_{\ell}\subseteq \dots \subseteq E_1\subseteq E$ that span the
  corresponding terminal sets with minimum total cost.

  The MLST problem and variants thereof have been 
  studied under various names including Multi-level Network Design, 
  Quality-of-Service Multicast tree, 
  Grade-of-Service Steiner tree, and Multi-Tier tree.  
  Several approximation results are known.  
  We first present two simple $O(\ell)$-approximation heuristics. Based on these, we introduce a rudimentary composite 
  algorithm that generalizes the above heuristics, and determine its approximation
  ratio by solving a linear program. 
  We then present a method that guarantees the same approximation ratio 
  using at most $2\ell$ Steiner tree computations.
  We compare these heuristics experimentally on various instances of up to 500
  vertices using three different network generation models.
  We also present various integer linear programming (ILP) formulations for the MLST problem, and
  compare their running times on these instances.
  To our knowledge, the composite algorithm achieves the best approximation ratio for up to $\ell=100$ levels,
  which is sufficient for most applications such as network visualization or designing multi-level infrastructure.
\end{abstract}

\keywords{Approximation algorithm; Steiner tree; multi-level graph
  representation.}

\maketitle

\renewcommand{\shortauthors}{R.~Ahmed et al.}

\begin{acks}
  This work is partially supported by \grantsponsor{nsf}{NSF}{https://nsf.gov} 
  grants \grantnum{nsf}{CCF-1423411} and \grantnum{nsf}{CCF-1712119}.

  The authors wish to thank the organizers and participants of the
  First Heiligkreuztal Workshop on Graph and Network Visualization where work on this problem began.
\end{acks}

\section{Introduction}

Let $G = (V, E)$ be an undirected, connected graph with positive edge costs $c \colon E \to \mathbb{R}^+$, and let $T \subseteq V$ be a set of vertices called \emph{terminals}. A \emph{Steiner tree} is a tree in~$G$ that spans~$T$. 
The \emph{network (graph) Steiner tree problem} (\STP) is to find a minimum-cost Steiner tree $E' \subseteq E$, where the cost of~$E'$ is $c(E') = \sum_{e \in E'} c(e)$. 
\STP is one of Karp's initial NP-hard problems \cite{Karp1972}; see also a survey~\cite{Winter1987}, an online compendium~\cite{Hauptmann2015}, and a textbook~\citep{Proemel2002}.

Due to its practical importance in many domains, 
there is a long history of exact and approximation algorithms for the problem.
The classical 2-approximation algorithm for \STP \cite{Gilbert1968}
uses the \emph{metric closure} of~$G$, i.e., the complete edge-weighted graph~$G^*$ with vertex set~$T$ in which, for every edge $uv$, the cost of~$uv$ equals the length of a shortest $u$--$v$ path in~$G$. 
A minimum spanning tree of~$G^*$ corresponds to a 2-approximate Steiner tree in~$G$.

Currently, the last in a long list of improvements is the LP-based approximation algorithm of Byrka et al.~\cite{Byrka2013}, which has a ratio of $\ln(4)+\varepsilon < 1.39$.
Their algorithm uses a new iterative randomized rounding technique.
Note that \STP is APX-hard \cite{Bern1989};
more concretely, it is NP-hard to approximate the problem within a factor of $96/95$~\cite{Chlebnik2008}.
This is in contrast to the geometric variant of the problem, where terminals correspond to points in the Euclidean or rectilinear plane.
Both variants admit polynomial-time approximation schemes (PTAS) \cite{Arora1998,Mitchell1999}, while this is not true for the general metric case \cite{Bern1989}.

In this paper, we consider the natural generalization of \STP where the terminals appear on ``levels'' (or ``grades of service'') and must be connected by edges of appropriate levels.
We propose new approximation algorithms and compare them to existing ones both theoretically and experimentally.

\begin{definition}[Multi-Level Steiner Tree (MLST) Problem] 
Given a connected, undirected graph $G = (V,E)$ with edge weights $c \colon E \rightarrow \mathbb{R}^+$ and $\ell$ nested terminal sets $T_\ell \subset \dots \subset T_1 \subseteq V$, a \emph{multi-level Steiner tree} consists of $\ell$ nested edge sets $E_\ell \subseteq \cdots \subseteq E_1 \subseteq E$ such that $E_i$ spans $T_i$ for all $1 \le i \le \ell$. The cost of an MLST is defined by the sum of the edge weights across all levels, $\sum_{i=1}^{\ell} c(E_i) = \sum_{i=1}^{\ell} \sum_{e \in E_i} c(e)$. The MLST problem is to find an MLST $E_{\OPT,\ell} \subseteq \dots \subseteq E_{\OPT,1} \subseteq E$ with minimum cost.
\end{definition}

Since the edge sets are nested, the cost of an MLST equivalently equals $\sum_{e \in E} L(e)c(e)$, where $L(e)$ denotes the highest level that edge $e$ appears in, where $L(e)=0$ if $e \not\in E_1$. This emphasizes that the cost of each edge is multiplied by the number of levels it appears on.

We denote the cost of an optimal MLST by \OPT.  
We can write $$\OPT=\ell\OPT_\ell+(\ell-1)\OPT_{\ell-1}+\dots+\OPT_1$$ 
where $\OPT_\ell = c(E_{\OPT, \ell})$ and $\OPT_i=c(E_{\OPT,i}\backslash E_{\OPT,i+1})$ for $\ell-1 \ge i \ge 1$. Thus $\OPT_{i}$ represents the cost of edges on level $i$ but not on level $i+1$ in the minimum cost MLST.
%
%
Figure~\ref{fig:mlst_example} shows an example of an MLST for $\ell=3$.

\begin{figure}[tb]

{\color{gray}\frame{\includegraphics[width=.23\textwidth, clip, trim=0cm 13.75cm 0cm 0cm]{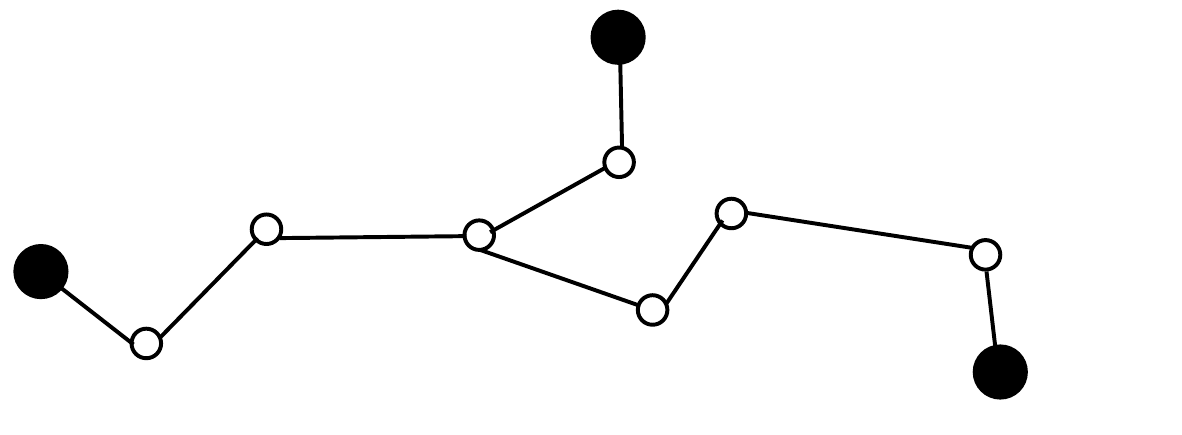}}
\hfill  \frame{\includegraphics[width=.23\textwidth, clip, trim=0cm 9.25cm 0cm 4.5cm]{mlst_example}}
\hfill
\frame{\includegraphics[width=.23\textwidth, clip, trim=0cm 4.8cm 0cm 8.95cm]{mlst_example}}
\hfill \frame{\includegraphics[width=.23\textwidth, clip, trim=0cm 0cm 0cm 13.75cm]{mlst_example}}}

  \caption{An illustration of an MLST with $\ell=3$ for the graph at the right. Solid and open circles represent terminal and non-terminal nodes, respectively. Note that the level-3 tree (left) is contained in the level-2 tree (mid), which is in turn contained in the level-1 tree (right).}
  \label{fig:mlst_example}
\end{figure}

\myparagraph{Applications.}
This problem has natural 
applications in designing multi-level infrastructure of low cost. 
Apart from this application in network design, multi-scale representations of graphs are useful in applications such as 
network visualization, where the goal is to represent a given graph 
at different levels of detail.  

\myparagraph{Previous Work.}
Variants of the MLST problem have been studied previously under various names, such as \emph{Multi-Level Network Design (MLND)}~\cite{Balakrishnan1994}, \emph{Multi-Tier Tree (MTT)}~\cite{mirchandani1996MTT}, \emph{Quality-of-Service (QoS) Multicast Tree}~\cite{1288137}, and \emph{Priority-Steiner Tree}~\cite{Chuzhoy2008}.

In MLND, the vertices of the given graph are partitioned into $\ell$ levels, and the task is to construct an $\ell$-level network.
Each edge $(i,j) \in E$ can contain one of $\ell$ different facility types (levels), each with a different cost (denoted ``secondary'' and ``primary'' with costs $0 \le b_{ij} \le a_{ij}$ for 2 levels).
The vertices on each level must be connected by edges of the corresponding level or higher, and edges of higher level are more costly.
The cost of an edge partition is the sum of all edge costs, and the task is to find a partition of minimum cost. Let~$\rho$ be the ratio of the best approximation algorithm for (single-level) \STP,
that is, currently $\rho=\ln(4)+\varepsilon < 1.39$. Balakrishnan et al.~\cite{Balakrishnan1994} gave a $(4/3)\rho$-approximation algorithm for 2-level MLND with proportional edge costs.
Note that the definitions of MLND and MLST treat the bottom level differently.
While MLND requires that \emph{all} vertices are connected eventually, this is not the case for MLST.

For MTT, which is equivalent to MLND, Mirchandani~\cite{mirchandani1996MTT}
presented a recursive algorithm that involves $2^\ell$ Steiner tree computations. 
For $\ell=3$, the algorithm achieves an approximation ratio of $1.522\rho$ independently of the edge costs $c^1,\dots,c^\ell \colon E \rightarrow \mathbb{R}^+$.
For proportional edge costs, Mirchandani's analysis yields even an approximation ratio of $1.5\rho$ for $\ell=3$.
Recall, however, that this assumes $T_1=V$, and setting the edge costs on the bottom level to zero means that edge costs are \emph{not} proportional.

In the QoS Multicast Tree problem~\cite{1288137} one is given a graph, a source vertex~$s$, and a level between~1 and~$k$ for each terminal (1~for highest priority).
The task is to find a minimum-cost Steiner tree that connects all terminals to~$s$.
The level of an edge~$e$ in this tree is the minimum over the levels of the terminals that are connected to~$s$ via~$e$.
The cost of the edges and of the tree are as above.
As a special case, Charikar et al.~\cite{1288137} studied the \emph{rate model}, where edge costs are proportional, and show that the problem remains NP-hard if all vertices (except the source) are terminals (at some level).
Note that if we choose as source any vertex at the top level~$T_\ell$, then MLST  can be seen as an instance of the rate model. 

Charikar et al.~\cite{1288137} gave a simple $4\rho$-approximation algorithm for the rate model.
Given an instance~$\varphi$, their algorithm constructs an instance~$\varphi'$ where the levels of all vertices are rounded up to the nearest power of~2. 
Then the algorithm simply computes a Steiner tree at each level of~$\varphi'$ and prunes the union of these Steiner trees into a single tree.
The ratio can be improved to $e\rho$, where $e$ is the base of the natural logarithm, using randomized doubling. 

Instead of taking the union of the Steiner trees on each rounded level, Karpinski et al.~\cite{Karpinski2005} contract them into the source in each step, which 
 yields a $2.454\rho$-approximation. 
They also gave a $(1.265+\varepsilon)\rho$-approximation for the 2-level case. 
(Since these results 
are not stated with respect
to~$\rho$, but depend on several Steiner tree approximation
algorithms~-- among them the best approximation algorithm with ratio
1.549~\cite{Robins2005} available at the time~-- we obtained the
numbers given here by dividing their results by 1.549 and stating the
factor~$\rho$.) 

For the more general Priority-Steiner Tree problem, where edge costs
are not necessarily proportional, Charikar et al.~\cite{1288137} gave
a $\min\{2 \ln |T|, \ell\rho\}$-approximation algorithm.
Chuzhoy et al.~\cite{Chuzhoy2008} showed that Priority-Steiner Tree does not admit an $O(\log\log n)$-approximation algorithm unless NP$\,\subseteq\,$DTIME$(n^{O(\log\log\log n)})$. 
For Euclidean MLST,
Xue at al.~\cite{Xue2001} gave a recursive algorithm that uses any algorithm for Euclidean Steiner Tree (EST) as a subroutine.  
With a PTAS \cite{Arora1998,Mitchell1999} for EST, the approximation ratio of their algorithm is $4/3+\varepsilon$ for $\ell=2$ and $(5+4\sqrt{2})/7+\varepsilon \approx 1.522+\varepsilon$ for $\ell=3$.

\myparagraph{Our Contribution.}
We give two simple approximation algorithms for
MLST, bottom-up and top-down, in
Section~\ref{subsection:heuristics}.  The bottom-up heuristic uses a
Steiner tree at the bottom level for the higher levels after pruning
unnecessary edges at each level.  The top-down heuristic first
computes a Steiner tree on the top level.  Then it passes edges down
from level to level until the bottom level terminals are spanned.

In Section ~\ref{subsection:composite}, we propose a composite heuristic that generalizes these, by 
examining all possible $2^{\ell-1}$ (partial) top-down and bottom-up
combinations and returning the one with the lowest cost. We propose a linear program that
finds the approximation ratio of the composite heuristic for any fixed
value of~$\ell$, and compute approximation ratios for up to
$\ell=100$ levels, which turn out to be better than those of previously known
algorithms.  However, the composite heuristic requires roughly $2^\ell \ell$ \STP
computations.  

Therefore, we propose a procedure that achieves the same approximation
ratio as the composite heuristic but needs at most $2\ell$ \STP computations.
In particular, it achieves a ratio of $1.5\rho$ for $\ell=3$ levels,
which settles a question posed by Karpinski et
al.~\cite{Karpinski2005} who were asking whether the
$(1.522+\varepsilon)$-approximation of Xue at al.~\cite{Xue2001} can be
improved for $\ell=3$.  Note that Xue et al.\ treated the Euclidean case,
so their ratio does not include the factor~$\rho$.
We generalize an integer linear programming (ILP) formulation for \STP
\cite{Polzin2001} to obtain an ILP formulation for the MLST problem in
Section~\ref{section:exact_algorithm}.
We experimentally evaluate several approximation and exact algorithms
on a wide range of problem instances in
Section~\ref{section:experimental_results}.  The results show that the
new algorithms are also surprisingly good in practice.  We conclude in
Section~\ref{section:conclusions}.

\section{Approximation Algorithms}
\label{section:approximation_algorithms}

In this section we propose several approximation algorithms for the MLST problem. In Section \ref{subsection:heuristics}, we show that the natural approach of computing edge sets either from top to bottom or vice versa, already give $O(\ell)$-approximations; we call these two approaches \emph{top-down} and \emph{bottom-up}, and denote their cost by \TOP and \BOT, respectively. Then, we show that running the two approaches and selecting the solution with minimum cost produces a better approximation ratio than either top-down or bottom-up.

In Section \ref{subsection:composite}, we propose a composite approach that mixes the top-down and bottom-up approaches by solving \STP on a certain subset of levels, then propagating the chosen edges to higher and lower levels in a way similar to the previous approaches. We then run the algorithm for each of the $2^{\ell-1}$ possible subsets, and select the solution with minimum cost. For all practically relevant values of $\ell$ ($\ell \leq 100$), our results improve over the state of the art.

\subsection{Top-Down and Bottom-Up Approaches}
\label{subsection:heuristics}

We present top-down and bottom-up approaches for computing approximate multi-level Steiner trees. The approaches are similar to the MST and Forward Steiner Tree (FST) heuristics by Balakrishnan et al.~\cite{Balakrishnan1994}; however, we generalize the analysis to an arbitrary number of levels.

In the top-down approach, an exact or approximate Steiner tree~$E_{\TOP,\ell}$ spanning the top level $T_\ell$ is computed. Then we modify the edge weights by setting $c(e) := 0$ for every edge $e \in E_{\TOP,\ell}$.  In the resulting graph, we compute a Steiner tree~$E_{\TOP,\ell-1}$ spanning the terminals in~$T_\ell-1$. This extends $E_{\TOP,\ell}$ in a greedy way to span the terminals in $T_\ell-1$ not already spanned by $E_{\TOP,\ell}$. Iterating this procedure for all levels yields a solution $E_{\TOP,\ell} \subseteq \cdots \subseteq E_{\TOP,1} \subseteq E$ with cost $\TOP$.

In the bottom-up approach, a Steiner tree~$E_{\BOT,1}$ spanning the bottom level $T_1$ is computed. This induces a valid solution for all levels. We can ``prune'' edges by letting $E_{\BOT,i}$ be the smallest subtree of $E_{\BOT,1}$ that spans all the terminals in $T_i$, giving a solution with cost $\BOT$. Note that the top-down and bottom-up approaches involve $\ell$ and 1 Steiner tree computations, respectively.

A natural approach is to run both top-down and bottom-up approaches and select the solution with minimum cost. This yields an approximation ratio better than those from top-down or bottom-up. Let $\rho \ge 1$ denote the approximation ratio for \STP (that is, $\rho = 1$ corresponds to using an exact \STP subroutine).
Let $\MIN_i$ denote the cost of a minimum Steiner tree over the terminal set 
$T_{i}$ with original edge weights, independently of other levels, so that $\MIN_{\ell} \le \MIN_{\ell-1} \le \ldots \le \MIN_1$. Then $\OPT \ge \sum_{i=1}^{\ell} \MIN_i$ trivially.

\begin{theorem} \label{theorem:heuristics}
For $\ell \ge 2$ levels, the top-down approach is an $\frac{\ell+1}{2}\rho$-approximation to the MLST problem, the bottom-up approach is an $\ell\rho$-approximation, and the algorithm returning the minimum of $\TOP$ and $\BOT$ is an $\frac{\ell+2}{3}\rho$-approximation.
\end{theorem}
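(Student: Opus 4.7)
The plan is to bound $\BOT$ and $\TOP$ separately as linear combinations of $\MIN_1,\dots,\MIN_\ell$, convert these to bounds in terms of $\OPT_1,\dots,\OPT_\ell$ (using $\MIN_i\le c(E_{\OPT,i})=\sum_{j\ge i}\OPT_j$), and then obtain the $\min(\TOP,\BOT)$ bound by choosing a clever convex combination of the two.

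First I would handle $\BOT$: since every pruned subtree $E_{\BOT,i}$ lies inside $E_{\BOT,1}$, we get $c(E_{\BOT,i})\le c(E_{\BOT,1})\le\rho\MIN_1$, so $\BOT\le\ell\rho\MIN_1$. Using $\MIN_1\le\OPT$ (because $\OPT\ge\sum_i\MIN_i\ge\MIN_1$) immediately gives the $\ell\rho$ bound. For $\TOP$, write $\TOP=\sum_{i=1}^\ell i\cdot c(X_i)$ with $X_i=E_{\TOP,i}\setminus E_{\TOP,i+1}$. Because at step $i$ the higher-level edges have been zeroed out, $c(X_i)$ equals the cost of the approximate Steiner tree on $T_i$ under the modified weights, and this is at most $\rho$ times the optimum under those weights, which in turn is at most $\rho\MIN_i$. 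Hence $\TOP\le\rho\sum_{i=1}^\ell i\,\MIN_i$.

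Next I would convert this to an $\OPT$-bound. Substituting $\MIN_i\le\sum_{j\ge i}\OPT_j$ and swapping the sums yields
\[
\sum_{i=1}^\ell i\,\MIN_i\ \le\ \sum_{i=1}^\ell i\sum_{j=i}^\ell\OPT_j\ =\ \sum_{j=1}^\ell\OPT_j\cdot\frac{j(j+1)}{2}.
\]
Since $\frac{j(j+1)/2}{j}=\frac{j+1}{2}\le\frac{\ell+1}{2}$ for every $1\le j\le\ell$, the right-hand side is at most $\frac{\ell+1}{2}\sum_j j\,\OPT_j=\frac{\ell+1}{2}\OPT$, giving the claimed $\TOP\le\frac{\ell+1}{2}\rho\,\OPT$.

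For the combined bound, the key idea is to use $\min(\TOP,\BOT)\le\tfrac{2}{3}\TOP+\tfrac{1}{3}\BOT$; the coefficient $2/3$ is exactly what makes the two ``worst'' values of $j$ (namely $j=1$ and $j=\ell$) produce the same ratio, so I would motivate it by matching the endpoint bounds. With the previous estimates,
\[
2\TOP+\BOT\ \le\ \rho\Bigl[2\sum_{i=1}^\ell i\,\MIN_i+\ell\MIN_1\Bigr]\ \le\ \rho\sum_{j=1}^\ell\OPT_j\bigl[j(j+1)+\ell\bigr].
\]
The proof then reduces to the algebraic inequality $j(j+1)+\ell\le(\ell+2)\,j$ for every $1\le j\le\ell$, which rearranges to $(j-1)(j-\ell)\le 0$ and is immediate. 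Dividing by $3$ yields $\min(\TOP,\BOT)\le\frac{\ell+2}{3}\rho\,\OPT$. The main obstacle I expect is not the calculus but guessing the right mixing ratio $2{:}1$: I would justify it either by solving the small LP that maximizes the ratio over nonnegative $(\OPT_j)$ subject to $\OPT=\sum_j j\OPT_j$, or by observing that one must simultaneously tighten the bound at $j=1$ (where $\BOT$ is tight) and at $j=\ell$ (where $\TOP$ is tight), which forces $\alpha=2/3$.
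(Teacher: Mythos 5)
Your proposal is correct and follows essentially the same route as the paper: bound $\TOP_i \le \rho\MIN_i \le \rho\sum_{j\ge i}\OPT_j$ so that $\OPT_j$ acquires coefficient $j(j+1)/2$ in the bound on $\TOP$, bound $\BOT \le \ell\rho\MIN_1$, and combine via $\min\{\TOP,\BOT\}\le \tfrac23\TOP+\tfrac13\BOT$ with the same mixing weight $\alpha=2/3$. The only (welcome) difference is that you verify the final coefficient inequality explicitly via the factorization $j(j+1)+\ell-(\ell+2)j=(j-1)(j-\ell)\le 0$, where the paper merely asserts that the system of inequalities ``can be shown algebraically'' to be satisfied by $(t,\alpha)=(\tfrac{\ell+2}{3}\rho,\tfrac23)$.
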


In the following we give the  proof of Theorem~\ref{theorem:heuristics}. Let \TOP be the total cost produced by
the top-down approach, and let $\TOP_{i} = c(E_{\TOP, i} \backslash E_{\TOP, i+1})$ be the cost of edges on level $i$ but not on level $i+1$. Then $\TOP = \sum_{i=1}^{\ell} i \TOP_i$. Define $\BOT$ and $\BOT_{i}$ analogously.

\begin{lemma} \label{lemma:2level}
The following inequalities relate $\TOP$ with $\OPT$:
\begin{align}
  \TOP_\ell & \le \rho \OPT_\ell \label{eqn:2level-1}\\
  \TOP_{\ell-i} & \le \rho(\OPT_{\ell-i} + \ldots + \OPT_\ell) \text{ for all } 1 \le i \le \ell-1 \label{eqn:2level-2}
\end{align}
\end{lemma}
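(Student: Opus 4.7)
My plan is to unpack the top-down construction step by step, and at each level~$\ell-i$ exhibit an explicit feasible Steiner tree in the (modified) graph whose cost forces the $\rho$-approximate subroutine to return something cheap. Throughout, the essential bookkeeping observation is that $\TOP_{\ell-i}$ equals exactly the cost, \emph{under the modified edge weights}, of the tree that the subroutine returns at step $\ell-i$: all edges of $E_{\TOP,\ell-i+1}$ were zeroed out before that call, so only the newly added edges contribute to its (modified) weight, and those are by definition counted by $\TOP_{\ell-i}$.

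For \eqref{eqn:2level-1}, the algorithm calls the $\rho$-approximate \STP subroutine on the \emph{original} graph with terminal set $T_\ell$. The set $E_{\OPT,\ell}$ is itself a feasible Steiner tree for $T_\ell$ of cost $\OPT_\ell$, so the true optimum is at most $\OPT_\ell$, and the approximation guarantee gives $\TOP_\ell = c(E_{\TOP,\ell}) \le \rho\,\OPT_\ell$.

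For \eqref{eqn:2level-2}, I would use $E_{\OPT,\ell-i}$ as the witness at level $\ell-i$. Because $E_{\OPT,\ell} \subseteq E_{\OPT,\ell-1} \subseteq \dots \subseteq E_{\OPT,\ell-i}$, telescoping gives
\[
c(E_{\OPT,\ell-i}) \;=\; \OPT_{\ell-i} + \OPT_{\ell-i+1} + \dots + \OPT_{\ell}
\]
under the original weights. Since the modification only \emph{decreases} weights, the cost of $E_{\OPT,\ell-i}$ in the modified graph is no larger. Because $E_{\OPT,\ell-i}$ spans $T_{\ell-i}$, it certifies that a minimum Steiner tree for $T_{\ell-i}$ in the modified graph has cost at most $\sum_{j=\ell-i}^{\ell} \OPT_j$. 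The $\rho$-approximate subroutine therefore returns a tree of modified-weight cost at most $\rho \sum_{j=\ell-i}^{\ell} \OPT_j$, and by the bookkeeping observation above this modified-weight cost is precisely $\TOP_{\ell-i}$, yielding the inequality.

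There is no genuine obstacle here; the one thing to be careful about is consistently distinguishing original from modified edge weights, and verifying that the quantity $\TOP_{\ell-i}$ (defined via original weights of the ``new'' edges $E_{\TOP,\ell-i}\setminus E_{\TOP,\ell-i+1}$) indeed coincides with the modified-weight cost of $E_{\TOP,\ell-i}$ used in the approximation bound.
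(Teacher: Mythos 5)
Your proposal is correct and follows essentially the same route as the paper: bound the returned tree's (modified-weight) cost by $\rho$ times the optimum for $T_{\ell-i}$, and certify that optimum via the feasible witness $E_{\OPT,\ell-i}$, whose cost telescopes to $\OPT_{\ell-i}+\dots+\OPT_\ell$. The only cosmetic difference is that the paper routes the bound through the intermediate quantity $\MIN_{\ell-i}$, while you argue directly from the witness; your explicit bookkeeping that $\TOP_{\ell-i}$ equals the modified-weight cost of $E_{\TOP,\ell-i}$ is a point the paper leaves implicit.
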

\begin{proof}
Inequality (\ref{eqn:2level-1}) follows from the fact that $E_{\TOP, \ell}$ is a $\rho$-approximation for \STP over $T_\ell$, that is, $\TOP_\ell \le \rho\MIN_\ell \le \rho\OPT_\ell$. To show (\ref{eqn:2level-2}), note that $\TOP_{\ell-i}$ represents the cost of the Steiner tree over terminals $T_{\ell-i}$ with some edges (those already included in $E_{\ell-i+1}$) having weight $c(e)$ set to zero. Then $\TOP_{\ell-i} \le \rho \MIN_{\ell-i}$. Since $E_{\OPT, \ell-i}$ spans $T_{\ell-i}$ by definition, we have $\MIN_{\ell-i} \le c(E_{\OPT, \ell-i}) = \OPT_{\ell-i} + \ldots + \OPT_{\ell}$. By transitivity, $\TOP_{\ell-i} \le \rho(\OPT_{\ell-i} + \ldots + \OPT_{\ell})$ as desired.
\end{proof}

Using Lemma \ref{lemma:2level}, we have an upper bound on $\TOP$ in terms of $\OPT_1, \ldots, \OPT_{\ell}$:
\begin{align*}
\TOP &= \ell \TOP_\ell + (\ell-1) \TOP_{\ell-1} + \ldots + \TOP_1 \\
&\le \ell\rho \OPT_{\ell} + (\ell-1)\rho(\OPT_{\ell-1} + \OPT_\ell) + \ldots + \rho(\OPT_1 + \OPT_2 + \ldots + \OPT_\ell) \\
&= \rho\left(\frac{(\ell+1)\ell}{2}\OPT_{\ell} + \frac{\ell(\ell-1)}{2} \OPT_{\ell-1} + \ldots + \frac{2 \cdot 1}{2} \OPT_1 \right) \\
&\le\frac{\ell+1}{2}\rho \cdot \OPT.
\end{align*}
Therefore the top-down approach is an $\frac{\ell+1}{2}\rho$-approximation. In Fig.~\ref{fig:examples-top} we provide an example showing that our analysis is tight for $\rho = 1$.

The bottom-up approach is a fairly trivial $\ell \rho$-approximation to the MLST problem, even without pruning edges. Consequentially, $\BOT \le \ell \cdot c(E_{\BOT,1})$ as pruning no edges results in a solution with cost $\ell \cdot c(E_{\BOT,1})$. 

As $E_{\BOT,1}$ is found by computing a Steiner tree over the bottom level $T_1$, we have $c(E_{\BOT,1}) \le \rho \MIN_1$. Additionally, $\MIN_1 \le c(E_{\OPT,1}) = \OPT_1 + \OPT_2 + \ldots + \OPT_{\ell}$ as $E_{\OPT,1}$ is necessarily a Steiner tree spanning $T_1$. Combining these inequalities yields
\begin{align*}
\BOT &\le \ell \cdot c(E_{\BOT,1}) \\
&\le \ell \rho \MIN_1 \\
&\le \ell \rho (\OPT_1 + \OPT_2 + \ldots + \OPT_{\ell}) \\
&\le \ell \rho \sum_{i=1}^{\ell} i \OPT_i \\
&= \ell \rho \cdot \OPT
\end{align*}

Again, the approximation ratio (for $\rho = 1$) is asymptotically tight; see Figure \ref{fig:examples-bot}.

We show that taking the better of the two solutions returned by
the top-down and the bottom-up approach provides a
$\frac{4}{3}\rho$-approximation to MLST for $\ell = 2$. To prove this,
we use the simple fact that $\min \{x,y\} \le \alpha x + (1-\alpha)y$ for all $x,y \in \mathbb{R}$ and $\alpha \in [0,1]$.  Using the previous results on the upper bounds for $\TOP$ and $\BOT$ for $\ell=2$:
\begin{align*}
\min \{\TOP,\BOT\} & \le \alpha (3\rho\,\OPT_2+\rho\,\OPT_1) +
  (1-\alpha)(2\rho \,\OPT_2 + 2\rho \,\OPT_1) \\
  & = (2+\alpha)\rho\, \OPT_2 + (2-\alpha)\rho\, \OPT_1
\end{align*}
Setting $\alpha = \frac{2}{3}$ gives $\min\{\TOP, \BOT\} \le \frac{8}{3}\rho\,\OPT_2 + \frac{4}{3}\rho\,\OPT_1 = \frac{4}{3}\rho\,\OPT$.

For $\ell > 2$ levels, using the same idea gives
\begin{align*}
    \min\{\TOP, \BOT\} &\le \alpha \rho \sum_{i=1}^{\ell} \frac{i(i+1)}{2} \OPT_i + (1-\alpha) \ell \rho \sum_{i=1}^{\ell} \OPT_i \\
    &= \sum_{i=1}^{\ell} \left[ \left(\frac{i(i+1)}{2} - \ell\right)\alpha + \ell \right] \rho \OPT_i \\
\end{align*}
Since we are comparing $\min\{\TOP, \BOT\}$ to $t \cdot \OPT$ for some approximation ratio $t > 1$, we can compare coefficients and find the smallest $t \ge 1$ such that the system of inequalities
	\begin{align*}
	\left(\frac{\ell(\ell+1)}{2} - \ell\right) \rho \alpha + \ell \rho &\le \ell t \\
	\left(\frac{(\ell-1)\ell}{2} - \ell\right) \rho \alpha + \ell \rho &\le (\ell-1)t \\[-0.8em]
	&\vdots \\[-0.8em]
	\left(\frac{2 \cdot 1}{2} - \ell\right) \rho \alpha + \ell \rho &\le t
	\end{align*}
	has a solution $\alpha \in [0,1]$. 
	Adding the first inequality to $\ell/2$ times the last inequality yields $\frac{\ell^2 + 2\ell}{2}\rho \le \frac{3\ell t}{2}$.  
	This leads to $t \ge \frac{\ell+2}{3}\rho$.  
	Also, it can be shown algebraically that $(t, \alpha) = (\frac{\ell+2}{3}\rho, \frac{2}{3})$ simultaneously satisfies the above inequalities. 
	This implies that $\min\{\TOP, \BOT\} \le \frac{\ell+2}{3}\rho \cdot \OPT$ and concludes the proof of Theorem~\ref{theorem:heuristics}.

\begin{figure}
    \centering
    \input{fig/TOPexample.tex}
    \caption{The approximation ratio of $\frac{\ell+1}{2}$ for the top-down approach is asymptotically tight. In the example above for $\ell=2$, the input graph (left) consists of a $(k+1)$-cycle with one edge of weight $k-\varepsilon$. The solution returned by top-down (right) has cost $\TOP = (k-\varepsilon) + (k-\varepsilon + k) \approx 3k$, whereas $\OPT = 2k$.}
    \label{fig:examples-top}
\end{figure}

\begin{figure}
    \centering
    \input{fig/BOTexample.tex}
    \caption{The approximation ratio of $\ell$ for the bottom-up approach is asymptotically tight. Using the same input graph as in Figure \ref{fig:examples-top}, except by modifying the edge of weight $k-\varepsilon$ so that its weight is $1+\varepsilon$, we see that $\BOT = 2k$, whereas $\OPT = (1+\varepsilon) + (1+\varepsilon + k-1) \approx k+1$.}
    \label{fig:examples-bot}
\end{figure}

Combining the graphs in Figures~\ref{fig:examples-top} and~\ref{fig:examples-bot} shows that our analysis of the combined top-down and bottom-up approaches (with ratio $\frac{4}{3}$) is asymptotically tight.

\subsection{Composite Algorithm}
\label{subsection:composite}

We describe an approach that generalizes the above approaches in order to obtain a better approximation ratio for $\ell > 2$ levels. The main idea behind this composite approach is the following: In the top-down approach, we choose a set of edges $E_{\TOP, \ell}$ that spans $T_\ell$, and then propagate this choice to levels $\ell-1, \ldots, 1$ by setting the cost of these edges to $0$. On the other hand, in the bottom-up approach, we choose a set of edges $E_{\BOT, 1}$ that spans $T_1$, which is propagated to levels $2, \ldots, \ell$ (possibly with some pruning of unneeded edges).
The idea is that for $\ell > 2$, we can choose a set of intermediate levels and propagate our choices between these levels in a top-down manner, and
to the levels lying in between them in a bottom-up manner. 

Formally, let $\mathcal{Q}=\{i_1,i_2,\dots,i_m\}$ with $1 = i_1<i_2<\dots<i_m \le \ell$ be a subset of levels sorted in increasing order. 
We first compute a Steiner tree $E_{i_m}=ST(G,T_{i_m})$ on the highest level $i_m$, which induces trees $E_{i_m + 1}, \ldots, E_{\ell}$ similar to the bottom-up approach.
Then, we set the weights of edges in $E_{i_m}$ in $G$ to zero (as in the top-down approach) and compute a Steiner tree $E_{i_{m-1}}=ST(G,T_{i_{m-1}})$ for level $i_{m-1}$ in the reweighed graph. 
Again, we can use $E_{i_{m-1}}$ to construct the trees $E_{i_{m-1}+1}, \ldots, E_{i_m - 1}$. Repeating this procedure until spanning $E_{i_1}=E_1$ results in a valid solution to the MLST problem.

Note that the top-down and bottom-up heuristics are special cases of
this approach, with $\mathcal{Q} = \{1,2,\ldots,\ell\}$ and $\mathcal{Q}
= \{1\}$, respectively.
Figure~\ref{fig:Composite} provides an illustration of how such a solution is computed in the top-down, bottom-up, and an arbitrary heuristic. Given $\mathcal{Q} \subseteq \{1,2,\ldots,\ell\}$, let $\CMP(\mathcal{Q})$ be the cost of the MLST solution returned by the composite approach over~$\mathcal{Q}$.

\begin{figure}[tb]
  \centering 
  \tikzset{every picture/.style={line width=0.75pt}} 

\begin{tikzpicture}[x=0.5pt,y=0.5pt,yscale=-1,xscale=1]

\draw  [dash pattern={on 0.84pt off 2.51pt}] (150.95,5) -- (239.5,5) -- (201.55,26) -- (113,26) -- cycle ;
\draw  [dash pattern={on 0.84pt off 2.51pt}] (150.95,41) -- (239.5,41) -- (201.55,62) -- (113,62) -- cycle ;
\draw  [dash pattern={on 0.84pt off 2.51pt}] (150.95,77) -- (239.5,77) -- (201.55,98) -- (113,98) -- cycle ;
\draw  [dash pattern={on 0.84pt off 2.51pt}] (150.95,113) -- (239.5,113) -- (201.55,134) -- (113,134) -- cycle ;
\draw  [dash pattern={on 0.84pt off 2.51pt}] (150.95,149) -- (239.5,149) -- (201.55,170) -- (113,170) -- cycle ;
\draw [color={rgb, 255:red, 255; green, 128; blue, 0 }  ,draw opacity=1 ]   (239.5,5) -- (239.5,39) ;
\draw [shift={(239.5,41)}, rotate = 270] [fill={rgb, 255:red, 255; green, 128; blue, 0 }  ,fill opacity=1 ][line width=0.75]  [draw opacity=0] (8.93,-4.29) -- (0,0) -- (8.93,4.29) -- cycle    ;

\draw [color={rgb, 255:red, 255; green, 128; blue, 0 }  ,draw opacity=1 ]   (239.5,41) -- (239.5,75) ;
\draw [shift={(239.5,77)}, rotate = 270] [fill={rgb, 255:red, 255; green, 128; blue, 0 }  ,fill opacity=1 ][line width=0.75]  [draw opacity=0] (8.93,-4.29) -- (0,0) -- (8.93,4.29) -- cycle    ;

\draw [color={rgb, 255:red, 255; green, 128; blue, 0 }  ,draw opacity=1 ]   (239.5,77) -- (239.5,111) ;
\draw [shift={(239.5,113)}, rotate = 270] [fill={rgb, 255:red, 255; green, 128; blue, 0 }  ,fill opacity=1 ][line width=0.75]  [draw opacity=0] (8.93,-4.29) -- (0,0) -- (8.93,4.29) -- cycle    ;

\draw [color={rgb, 255:red, 255; green, 128; blue, 0 }  ,draw opacity=1 ]   (239.5,113) -- (239.5,147) ;
\draw [shift={(239.5,149)}, rotate = 270] [fill={rgb, 255:red, 255; green, 128; blue, 0 }  ,fill opacity=1 ][line width=0.75]  [draw opacity=0] (8.93,-4.29) -- (0,0) -- (8.93,4.29) -- cycle    ;

\draw  [dash pattern={on 0.84pt off 2.51pt}] (305.95,5) -- (394.5,5) -- (356.55,26) -- (268,26) -- cycle ;
\draw  [dash pattern={on 0.84pt off 2.51pt}] (305.95,41) -- (394.5,41) -- (356.55,62) -- (268,62) -- cycle ;
\draw  [dash pattern={on 0.84pt off 2.51pt}] (305.95,77) -- (394.5,77) -- (356.55,98) -- (268,98) -- cycle ;
\draw  [dash pattern={on 0.84pt off 2.51pt}] (305.95,113) -- (394.5,113) -- (356.55,134) -- (268,134) -- cycle ;
\draw  [dash pattern={on 0.84pt off 2.51pt}] (305.95,149) -- (394.5,149) -- (356.55,170) -- (268,170) -- cycle ;
\draw [color={rgb, 255:red, 0; green, 0; blue, 255 }  ,draw opacity=1 ]   (268,136) -- (268,170) ;

\draw [shift={(268,134)}, rotate = 90] [fill={rgb, 255:red, 0; green, 0; blue, 255 }  ,fill opacity=1 ][line width=0.75]  [draw opacity=0] (8.93,-4.29) -- (0,0) -- (8.93,4.29) -- cycle    ;
\draw [color={rgb, 255:red, 0; green, 0; blue, 255 }  ,draw opacity=1 ]   (268,100) -- (268,134) ;

\draw [shift={(268,98)}, rotate = 90] [fill={rgb, 255:red, 0; green, 0; blue, 255 }  ,fill opacity=1 ][line width=0.75]  [draw opacity=0] (8.93,-4.29) -- (0,0) -- (8.93,4.29) -- cycle    ;
\draw [color={rgb, 255:red, 0; green, 0; blue, 255 }  ,draw opacity=1 ]   (268,64) -- (268,98) ;

\draw [shift={(268,62)}, rotate = 90] [fill={rgb, 255:red, 0; green, 0; blue, 255 }  ,fill opacity=1 ][line width=0.75]  [draw opacity=0] (8.93,-4.29) -- (0,0) -- (8.93,4.29) -- cycle    ;
\draw [color={rgb, 255:red, 0; green, 0; blue, 255 }  ,draw opacity=1 ]   (268,28) -- (268,62) ;

\draw [shift={(268,26)}, rotate = 90] [fill={rgb, 255:red, 0; green, 0; blue, 255 }  ,fill opacity=1 ][line width=0.75]  [draw opacity=0] (8.93,-4.29) -- (0,0) -- (8.93,4.29) -- cycle    ;
\draw  [dash pattern={on 0.84pt off 2.51pt}] (460.95,5) -- (549.5,5) -- (511.55,26) -- (423,26) -- cycle ;
\draw  [dash pattern={on 0.84pt off 2.51pt}] (460.95,41) -- (549.5,41) -- (511.55,62) -- (423,62) -- cycle ;
\draw  [dash pattern={on 0.84pt off 2.51pt}] (460.95,77) -- (549.5,77) -- (511.55,98) -- (423,98) -- cycle ;
\draw  [dash pattern={on 0.84pt off 2.51pt}] (460.95,113) -- (549.5,113) -- (511.55,134) -- (423,134) -- cycle ;
\draw  [dash pattern={on 0.84pt off 2.51pt}] (460.95,149) -- (549.5,149) -- (511.55,170) -- (423,170) -- cycle ;
\draw [color={rgb, 255:red, 0; green, 0; blue, 255 }  ,draw opacity=1 ]   (423,136) -- (423,170) ;

\draw [shift={(423,134)}, rotate = 90] [fill={rgb, 255:red, 0; green, 0; blue, 255 }  ,fill opacity=1 ][line width=0.75]  [draw opacity=0] (8.93,-4.29) -- (0,0) -- (8.93,4.29) -- cycle    ;
\draw [color={rgb, 255:red, 255; green, 128; blue, 0 }  ,draw opacity=1 ]   (549.5,41) -- (549.5,75) ;
\draw [shift={(549.5,77)}, rotate = 270] [fill={rgb, 255:red, 255; green, 128; blue, 0 }  ,fill opacity=1 ][line width=0.75]  [draw opacity=0] (8.93,-4.29) -- (0,0) -- (8.93,4.29) -- cycle    ;

\draw [color={rgb, 255:red, 0; green, 0; blue, 255 }  ,draw opacity=1 ]   (423,28) -- (423,62) ;

\draw [shift={(423,26)}, rotate = 90] [fill={rgb, 255:red, 0; green, 0; blue, 255 }  ,fill opacity=1 ][line width=0.75]  [draw opacity=0] (8.93,-4.29) -- (0,0) -- (8.93,4.29) -- cycle    ;
\draw [color={rgb, 255:red, 255; green, 128; blue, 0 }  ,draw opacity=1 ]   (549.5,77) -- (549.5,147) ;
\draw [shift={(549.5,149)}, rotate = 270] [fill={rgb, 255:red, 255; green, 128; blue, 0 }  ,fill opacity=1 ][line width=0.75]  [draw opacity=0] (8.93,-4.29) -- (0,0) -- (8.93,4.29) -- cycle    ;

\draw (150,180) node   {$\mathcal{Q}\ = \{1,2,3,4,5\}$};
\draw (305,180) node   {$\mathcal{Q}\ =\{1\}$};
\draw (460,180) node   {$\mathcal{Q}\ =\{1,3,4\}$};
\draw (71,159) node  [align=left] {Level 1};
\draw (71,123) node  [align=left] {Level 2};
\draw (71,87) node  [align=left] {Level 3};
\draw (71,51) node  [align=left] {Level 4};
\draw (71,15) node  [align=left] {Level 5};

\end{tikzpicture}
  \caption{Illustration of the composite heuristic for various subsets $\mathcal{Q}$, with $\ell=5$. Orange arrows pointing downward indicate propagation of edges similar to the top-down approach. Blue arrows pointing upward indicate pruning of unneeded edges, similar to the bottom-up approach.}
  \label{fig:Composite}
\end{figure}

\begin{lemma}\label{lemma:cmp}
For any set~$\mathcal{Q} = \{i_1, \ldots, i_m \} \subseteq \{1,2,\ldots,\ell\}$ with $1 = i_1 < \ldots < i_m \le \ell$, we have \[\CMP(\mathcal{Q}) \le \rho \sum_{k=1}^m (i_{k+1}-1) \MIN_{i_k}\] with the convention $i_{m+1} = \ell+1$.
\end{lemma}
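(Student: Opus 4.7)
The plan is to decompose the total cost of the composite solution level by level, and then use the structure of the algorithm to bound the cost at each ``pivot'' level $i_k$ in terms of $\MIN_{i_k}$.

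First I would observe that, for each $j$ with $i_k \le j < i_{k+1}$, the tree $E_j$ produced by the algorithm is obtained by pruning $E_{i_k}$ down to a smallest subtree spanning $T_j$, and therefore $c(E_j) \le c(E_{i_k})$. Grouping the $\ell$ levels into the intervals $[i_k, i_{k+1})$ (using $i_{m+1}=\ell+1$) and summing yields
\[
\CMP(\mathcal{Q}) \;=\; \sum_{j=1}^{\ell} c(E_j) \;\le\; \sum_{k=1}^{m}(i_{k+1}-i_k)\,c(E_{i_k}).
\]

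Next I would bound $c(E_{i_k})$ by a telescoping argument that mirrors the algorithm. When the Steiner tree $E_{i_k}$ is computed, the edges previously chosen in $E_{i_{k+1}}, \dots, E_{i_m}$ have been reweighted to zero, and the algorithm enforces nestedness $E_{i_{k+1}}\subseteq E_{i_k}$ (any missing edges of $E_{i_{k+1}}$ can be added in for free). Hence, measured in original weights,
\[
c(E_{i_k}) \;\le\; c(E_{i_{k+1}}) + S_k,
\]
where $S_k$ is the cost, in the reweighted graph, of the $\rho$-approximate Steiner tree over $T_{i_k}$. Since an optimal Steiner tree over $T_{i_k}$ in the original graph has cost $\MIN_{i_k}$ and its cost can only drop in the reweighted graph, we have $S_k \le \rho\,\MIN_{i_k}$. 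Unrolling (with the boundary case $c(E_{i_{m+1}})\coloneqq 0$) gives $c(E_{i_k}) \le \rho \sum_{j=k}^{m} \MIN_{i_j}$.

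Finally I would plug this into the previous bound and swap the order of summation:
\[
\CMP(\mathcal{Q}) \;\le\; \rho \sum_{k=1}^{m}(i_{k+1}-i_k)\sum_{j=k}^{m}\MIN_{i_j}
\;=\; \rho \sum_{j=1}^{m} \MIN_{i_j}\sum_{k=1}^{j}(i_{k+1}-i_k)
\;=\; \rho \sum_{j=1}^{m}(i_{j+1}-1)\,\MIN_{i_j},
\]
where the last step uses the telescoping sum $\sum_{k=1}^{j}(i_{k+1}-i_k)=i_{j+1}-i_1=i_{j+1}-1$ since $i_1=1$. This is exactly the claimed inequality. The only subtle point is the nestedness argument in the second step: one must verify that the algorithm can indeed be implemented so that $E_{i_{k+1}}\subseteq E_{i_k}$ without increasing the cost, which follows because the edges of $E_{i_{k+1}}$ are free in the reweighted graph at step $k$.
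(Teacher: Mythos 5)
Your proposal is correct and is essentially the paper's argument: both rest on the facts that the incremental Steiner tree computed at pivot level $i_k$ costs at most $\rho\,\MIN_{i_k}$ in original weights (because zeroing out previously chosen edges can only decrease the optimum over $T_{i_k}$) and that those edges can occur only on levels $1$ through $i_{k+1}-1$. The paper charges each batch of new edges directly to those $i_{k+1}-1$ levels, whereas you sum per level and then exchange the order of summation --- the transpose of the same double count --- so your write-up is just a more explicit rendering of the same proof, including a welcome remark on why nestedness costs nothing.
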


For example, $\mathcal{Q} = \{1,3,4\}$ with $\ell=5$ in Figure~\ref{fig:Composite} yields $\CMP(\mathcal{Q}) \le \rho\left(2 \MIN_1 + 3 \MIN_3 + 5 \MIN_4\right)$. With $\mathcal{Q} = \{1,2,3,4,5\}$, we have $\CMP(\mathcal{Q}) \le \rho\left(\MIN_1 + 2\MIN_2 + \ldots + 5\MIN_5\right)$, similar to Lemma~\ref{lemma:2level}.

\begin{proof}
The proof is similar to that of Lemma~\ref{lemma:2level}: 
when we compute $E_{i_m}$, a $\rho$-approximate Steiner tree for~$T_{i_m}$, we incur a cost of at most $\ell \rho \MIN_{i_m}$.  
This is due to the fact that the cost of $E_{i_m}$ is at most $\rho \MIN_{i_m}$, and these edges are propagated to all levels 1 through $\ell$, incurring a cost of at most $\ell \rho \MIN_{i_m}$. 
When we compute $E_{i_k}$, we incur a cost of at most $\rho \MIN_{i_k}$, and these edges are propagated to levels 1 through $i_{k+1} - 1$, incurring a cost of at most $(i_{k+1} - 1) \rho \MIN_{i_k}$.
\end{proof}
Using the trivial lower bound $\OPT\geq\sum_{i=1}^\ell \MIN_i$, we can find an upper bound for
the approximation ratio. Without loss of generality, assume
$\sum_{i=1}^\ell \MIN_i=1$, so that $\OPT \ge 1$; otherwise the edge weights can be scaled linearly. Also, since all the equations and inequalities scale linearly in $\rho$, we assume $\rho=1$. Then for a given $\mathcal{Q}$, the ratio $\frac{\CMP(\mathcal{Q})}{\OPT}$ is upper bounded by
\[\frac{\CMP(\mathcal{Q})}{\OPT}\leq
\frac{\rho \sum_{k=1}^m (i_{k+1}-1) \MIN_{i_k}}{\sum_{i=1}^\ell
  \MIN_i}= \sum_{k=1}^m (i_{k+1}-1) \MIN_{i_k}.\]



Given any $\mathcal{Q}$, we determine an approximation ratio to the MLST problem, in a way similar to the top-down and bottom-up approaches. 
We start with the following lemma:

\begin{lemma} \label{lemma:cmp-lp}
Let $c_1, c_2, \ldots, c_{\ell}$ be given non-negative real numbers with the property that $c_1 > 0$, and the nonzero ones are strictly increasing (i.e., if $i < j$ and $c_i, c_j \neq 0$, then $0 < c_i < c_j$.)
Consider the following linear program: $\max c_1y_1 + c_2y_2 + \ldots + c_{\ell}y_{\ell}$ subject to $y_1 \ge y_2 \ge \ldots \ge y_{\ell} \ge 0$, and $\sum_{i=1}^{\ell} y_i = 1$. 
Then the optimal solution has $y_1 = y_2 = \ldots = y_k = \frac{1}{k}$ for some $k$, and $y_i = 0$ for $i > k$.
\end{lemma}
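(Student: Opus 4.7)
The plan is to apply the classical ``difference'' substitution that turns a non\-increasing, non\-negative vector into a non\-negative one, reducing the LP to a single\-constraint LP whose vertices can be read off immediately.

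First, I would set $y_{\ell+1} := 0$ and define the new variables
\[
z_i = y_i - y_{i+1} \qquad (1 \le i \le \ell).
\]
The monotonicity constraints $y_1 \ge y_2 \ge \dots \ge y_\ell \ge 0$ become exactly $z_i \ge 0$ for all $i$, and conversely any $(z_1,\dots,z_\ell) \ge 0$ yields a feasible $y$ via $y_i = \sum_{j \ge i} z_j$. The equality $\sum_{i} y_i = 1$ transforms into
\[
1 = \sum_{i=1}^{\ell} \sum_{j \ge i} z_j = \sum_{j=1}^{\ell} j\, z_j.
\]
Next, I would rewrite the objective by swapping the order of summation:
\[
\sum_{i=1}^{\ell} c_i y_i \;=\; \sum_{i=1}^{\ell} c_i \sum_{j \ge i} z_j \;=\; \sum_{j=1}^{\ell} \Bigl( \sum_{i \le j} c_i \Bigr) z_j \;=\; \sum_{j=1}^{\ell} C_j z_j,
\]
where $C_j := c_1 + c_2 + \dots + c_j$. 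So the original LP is equivalent to
\[
\max \sum_{j=1}^{\ell} C_j z_j \quad \text{subject to} \quad z_j \ge 0, \ \sum_{j=1}^{\ell} j z_j = 1.
\]

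This reformulated LP has a single equality constraint and $\ell$ non\-negativity constraints. Any vertex of its feasible polytope must make $\ell$ linearly independent constraints tight, and since the equality $\sum j z_j = 1$ accounts for one of them, the remaining $\ell-1$ tight constraints force all but one $z_j$ to be zero. Hence, a vertex optimum has the form $z_k = 1/k$ for a single index $k$, with $z_j = 0$ for $j \ne k$, and attains the value $C_k/k$. Since the LP is bounded (all $C_j$ are finite) and feasible, an optimum is attained at such a vertex.

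Translating back to the $y$-variables, $z_k = 1/k$ with all other $z_j = 0$ gives precisely $y_1 = y_2 = \dots = y_k = 1/k$ and $y_{k+1} = \dots = y_\ell = 0$, as claimed. The hypotheses that $c_1 > 0$ and that the nonzero $c_i$ are strictly increasing are not needed for the structural conclusion itself; they serve only to ensure that $C_k > 0$ for all relevant $k$ and will come into play when the next step of the paper actually identifies the maximizing index $k$. The only subtlety in the proof is verifying that the substitution $z_i = y_i - y_{i+1}$ is an affine bijection between the two feasible regions, and that the new LP's vertices have the single\-support form described above; both are routine.
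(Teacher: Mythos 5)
Your proof is correct, but it takes a genuinely different route from the paper's. The paper argues by local exchange on the $y$-variables: assuming an optimal solution has an index $i$ with $y_i > y_{i+1} > 0$, it either moves the mass of $y_{i+1}$ onto $y_1$ (when $c_{i+1}=0$, using $c_1>0$) or replaces $y_i$ and $y_{i+1}$ by their average (when $c_i < c_{i+1}$), in both cases strictly increasing the objective and reaching a contradiction. Your substitution $z_i = y_i - y_{i+1}$ instead maps the feasible region bijectively onto the simplex $\{z \ge 0,\ \sum_{j} j\,z_j = 1\}$, whose vertices are exactly the points with a single nonzero coordinate $z_k = 1/k$, so an optimum of the claimed form exists for purely polyhedral reasons. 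Your approach buys several things: it requires no hypothesis on the $c_i$ at all (as you correctly observe, the monotonicity assumption only matters later, when the maximizing index is identified), it yields the corollary $\max_k \frac{1}{k}(c_1+\cdots+c_k)$ immediately, and it sidesteps a small wrinkle in the paper's exchange step --- in the case $c_{i+1}=0$, zeroing out $y_{i+1}$ alone can violate $y_{i+1} \ge y_{i+2}$ unless the later coordinates are handled as well. The paper's argument, in exchange, is elementary and aims to constrain the structure of \emph{every} optimal solution rather than exhibiting one of the claimed form; either reading suffices for the way the lemma is used downstream.
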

\begin{proof}
Suppose that in the optimal solution, there exists some $i$ such that $y_i > y_{i+1} > 0$. If $c_{i+1} = 0$, then setting $y_1 := y_1 + y_{i+1}$ and $y_{i+1} = 0$ improves the objective function, a contradiction. If $c_{i+1} \neq 0$, then $c_i < c_{i+1}$, and it can be shown with elementary algebra that replacing $y_i$ and $y_{i+1}$ by their arithmetic mean, $\frac{y_i + y_{i+1}}{2}$, improves the objective function as well.
\end{proof}

A simple corollary of Lemma \ref{lemma:cmp-lp} is that the maximum value of the objective in the LP equals $\max\left(c_1, \frac{1}{2}(c_1 + c_2), \frac{1}{3}(c_1 + c_2 + c_3), \ldots, \frac{1}{\ell}(c_1 + \ldots + c_{\ell})\right)$. For a given $\mathcal{Q}$ (assuming $\rho = 1$), the composite heuristic on $\mathcal{Q} = \{i_1, \ldots, i_m\}$ is a $t$-approximation, where $t$ is the solution to the following simple linear program: $\max t$ subject to $t \le \sum_{k=1}^m (i_{k+1} - 1)  y_{i_k}$; 
$y_1 \ge y_2 \ge \dots \ge y_\ell \ge 0$; $\sum_{i=1}^{\ell} y_i = 1$. As this LP is of the form given in Lemma \ref{lemma:cmp-lp}, we can easily compute an approximation ratio for a given~$\mathcal{Q}$ as
\[t(Q)=\max_{m'\le m} \frac{\sum_{k=1}^{m'} (i_{k+1} - 1)}{i_{m'}}\].

For example, the corresponding objective function for $\TOP$ is $\max y_1 + 2y_2 + \ldots + \ell y_{\ell}$, and the maximum equals $\max \left(1, \frac{1}{2}(1+2), \ldots, \frac{1}{\ell} (1+2+\ldots+\ell)\right) = \frac{\ell+1}{2}$, which is consistent with Theorem~\ref{theorem:heuristics}. The corresponding objective function for $\BOT$ is $\max \ell=\ell$.

An important choice of $\mathcal{Q}$ is $\mathcal{Q} = \{1, 2, 4, \ldots, 2^m\}$ where $m = \floor{\log_2 \ell}$. 
Charikar et al.~\cite{1288137} show that this is a 4-approximation, assuming $\rho=1$. 
Indeed, according to the above formula $t=\max(1,\frac{(2-1)+(2^2-1)}{2^1},\ldots, \frac{ \sum_{i=0}^m (2^{i+1} - 1)}{2^m})=4-m/2^m \le 4$.

When $\ell \ge 2$, there are $2^{\ell-1}$ possible subsets $\mathcal{Q}$, giving $2^{\ell-1}$ possible heuristics. In particular, for $\ell=2$, the only $2^{2-1}=2$ heuristics are top-down and bottom-up (Section~\ref{subsection:heuristics}). The composite algorithm executes all possible heuristics and selects the MLST with minimum cost (denoted $\CMP$):
\[\CMP=\min_{\substack{\mathcal{Q} \subseteq \{1,\dots,\ell\} \\ 1 \in \mathcal{Q}}}\CMP(\mathcal{Q}).\]

\begin{theorem} \label{theorem:composite-lp}
For $\ell \ge 2$, the composite heuristic produces a $t_{\ell}$-approximation, where $t_{\ell}$ is the solution to the following linear program (LP):
\begin{align*}
    \max t \hspace*{9ex} \\
     \text{ subject to } \hspace*{4ex} t &\le \sum_{k=1}^{m} (i_{k+1}-1)y_{i_k} & \forall \, \mathcal{Q} = \{i_1, \ldots, i_m\} \\
                  y_i &\ge y_{i+1} & \forall \, 1 \le i \le \ell-1 \\
                  \sum_{i=1}^\ell y_{i} &= 1 \\
                  y_i &\ge 0 & \forall \, 1 \le i \le \ell
\end{align*}
\end{theorem}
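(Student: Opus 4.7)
The core idea is to recast the worst-case ratio $\CMP/\OPT$ as an optimization over admissible sequences $(\MIN_i)_{i=1}^\ell$, and show this is captured exactly by the given LP. First, I would apply Lemma~\ref{lemma:cmp} to every admissible subset $\mathcal{Q} = \{i_1,\dots,i_m\}$ with $i_1=1$, yielding
\[
  \CMP(\mathcal{Q}) \;\le\; \rho \sum_{k=1}^{m} (i_{k+1}-1)\,\MIN_{i_k}.
\]
Since the composite heuristic selects the best $\mathcal{Q}$, this gives
$\CMP \le \rho \min_{\mathcal{Q}} \sum_{k=1}^{m} (i_{k+1}-1)\,\MIN_{i_k}$.

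Next I would normalize the instance-dependent $\MIN_i$ values into LP variables. Because the terminal sets are nested, one has $\MIN_\ell \le \MIN_{\ell-1} \le \dots \le \MIN_1$. Setting $S := \sum_{i=1}^{\ell}\MIN_i > 0$ and $y_i := \MIN_i/S$ produces a vector with $y_1 \ge y_2 \ge \dots \ge y_\ell \ge 0$ and $\sum_i y_i = 1$, i.e.\ a feasible point for the LP's polytope. Combining this substitution with the trivial bound $\OPT \ge S$ gives
\[
  \frac{\CMP}{\OPT} \;\le\; \rho \min_{\mathcal{Q}} \sum_{k=1}^{m} (i_{k+1}-1)\, y_{i_k}.
\]

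Finally, define $t^*(y) := \min_{\mathcal{Q}} \sum_{k=1}^m (i_{k+1}-1)\,y_{i_k}$; this is precisely the largest scalar $t$ for which the pair $(t,y)$ satisfies every inequality $t \le \sum_k (i_{k+1}-1)\,y_{i_k}$ of the LP. Because $y$ already satisfies the monotonicity and normalization constraints, $(t^*(y), y)$ is LP-feasible, and therefore $t^*(y) \le t_\ell$. Hence $\CMP/\OPT \le \rho \cdot t_\ell$, matching the claim (with the factor $\rho$ implicit in the LP's convention of $\rho=1$, as noted before the theorem statement).

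The proof is essentially a packaging argument rather than a hard combinatorial step: the substantive inequality is Lemma~\ref{lemma:cmp}, and the only potential obstacle is confirming the correspondence between instance-dependent quantities and LP variables---in particular, that monotonicity $\MIN_\ell \le \dots \le \MIN_1$ places $y$ inside the LP polytope, and that $\min_{\mathcal{Q}}$ on the heuristic side exactly mirrors the family of LP constraints parameterized by $\mathcal{Q}$.
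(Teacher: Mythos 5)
Your proposal is correct and follows essentially the same route as the paper's proof: invoke Lemma~\ref{lemma:cmp} for every $\mathcal{Q}$, normalize the $\MIN_i$ values into an LP-feasible vector $y$ (using monotonicity of the $\MIN_i$ and the bound $\OPT \ge \sum_i \MIN_i$), and observe that the resulting minimum over $\mathcal{Q}$ is a feasible objective value, hence at most $t_\ell$. Your version is just slightly more explicit about the normalization step and about why $y$ lies in the LP polytope, which the paper handles by a ``without loss of generality'' remark.
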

\begin{proof}
Again we assume, without loss of generality, that $\rho = 1$ and that $\sum_{i=1}^{\ell} \MIN_i = 1$. Given an instance of MLST and the corresponding values $\MIN_1$, \ldots, $\MIN_{\ell}$, let $\mathcal{Q}^* = \{i_1, \ldots, i_m\}$ denote the subset of $\{1,\ldots,\ell\}$ for which the quantity $\sum_{k=1}^m (i_{k+1} - 1) \MIN_{i_k}$ is minimized. Then by Lemma \ref{lemma:cmp}, we have $\CMP \le \CMP(\mathcal{Q}^*) \le \sum_{k=1}^m (i_{k+1} - 1)\MIN_{i_k} = \hat{t}$. So for a specific instance of the MLST problem, $\CMP$ is upper bounded by $\hat{t}$, which is the minimum over $2^{\ell-1}$ different expressions, all linear combinations of $\MIN_1$, \ldots, $\MIN_{\ell}$.

As $t_{\ell}$ is the maximum of the objective over all feasible $\MIN_1$, \ldots, $\MIN_{\ell}$, we have $\hat{t} \le t_{\ell}$, so $\CMP \le t_{\ell} = t_{\ell} \cdot \OPT$ as desired.
\end{proof}

The above LP has $\ell+1$ variables and $2^{\ell-1} + 2\ell$ constraints. 
Each subset $\mathcal{Q} \in \{1,2,\ldots,\ell\}$ with $1 \in \mathcal{Q}$ corresponds to one constraint.

\begin{lemma}

The system of $2^{\ell-1}$ inequalities can be expressed in matrix form as
\[t\cdot  \boldsymbol{1}_{2^{\ell-1}\times 1} \le M_\ell\boldsymbol{y}\]
where $\boldsymbol{y}=[y_1,y_2,\cdots,y_\ell]^T$ and $M_\ell$ is a
$(2^{\ell-1}\times \ell)$-matrix that can be constructed recursively as
\[M_\ell=
\begin{bmatrix}
    P_{\ell-1}+M_{\ell-1} & \boldsymbol{0}_{2^{\ell-2}\times 1} \\
    M_{\ell-1} & \ell\cdot \boldsymbol{1}_{2^{\ell-2}\times 1} 
\end{bmatrix}
 \text{ and } P_\ell=
\begin{bmatrix}
    P_{\ell-1} & \boldsymbol{0}_{2^{\ell-2}\times 1}  \\
    \boldsymbol{0}_{2^{\ell-2}\times (\ell-1)} & \boldsymbol{1}_{2^{\ell-2}\times 1} 
\end{bmatrix} \]
starting with the $1\times 1$ matrices $M_1=[1]$ and $P_1=[1]$. 
\end{lemma}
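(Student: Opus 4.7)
The plan is to prove the claim by induction on $\ell$. The base case $\ell=1$ is immediate: the only subset containing $1$ is $\mathcal{Q}=\{1\}$, which gives the constraint $t \le (\ell+1-1)y_1 = y_1$, so $M_1=[1]$ as stated.

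For the inductive step, I would first fix a canonical ordering of the $2^{\ell-1}$ subsets of $\{1,\ldots,\ell\}$ containing $1$: list the $2^{\ell-2}$ subsets not containing $\ell$ first (in the same order used in the $(\ell-1)$-instance), then the $2^{\ell-2}$ subsets containing $\ell$, each obtained by adjoining $\ell$ to the corresponding subset in the first half. This makes the row-indexings of $M_\ell$, $M_{\ell-1}$, and $P_{\ell-1}$ compatible and justifies splitting $M_\ell$ into the claimed $2\times 2$ block structure.

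Next I would analyze how the coefficient in each row changes between the $(\ell-1)$-instance and the $\ell$-instance. For a subset $\mathcal{Q}=\{i_1,\ldots,i_m\}$ with $i_m<\ell$, the coefficients $i_{k+1}-1$ of $y_{i_k}$ for $k<m$ are unchanged. However, the coefficient of $y_{i_m}$ jumps from $\ell-1$ (since $i_{m+1}=\ell$ in the $(\ell-1)$-instance) to $\ell$ (since now $i_{m+1}=\ell+1$), while the new column for $y_\ell$ is $0$. The extra $+1$ in column $i_m$ is exactly what a matrix that places a $1$ in the ``largest-element'' column of each indexing subset records; I would verify by inspection that such a matrix satisfies the stated recursion for $P_{\ell-1}$, which yields the top block $[P_{\ell-1}+M_{\ell-1}\mid \boldsymbol{0}]$. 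For a subset $\mathcal{Q}'=\mathcal{Q}\cup\{\ell\}$ in the bottom half, the first $m$ entries revert to the $M_{\ell-1}$ coefficients (because now $i_{m+1}=\ell$ matches the $(\ell-1)$-instance), and the coefficient of $y_\ell$ is $\ell+1-1=\ell$, yielding the bottom block $[M_{\ell-1}\mid \ell\cdot\boldsymbol{1}]$.

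Finally I would verify the recursion for $P_\ell$ itself by the same dichotomy: by definition $P_\ell(r,j)=1$ iff $j$ is the largest element of the $r$-th indexing subset, and $0$ otherwise. For subsets in the top half this largest element lies in $\{1,\ldots,\ell-1\}$, so the top-left block reproduces $P_{\ell-1}$ (with a zero column appended); for subsets in the bottom half the largest element is $\ell$, giving the column of $1$'s on the right and zeros on the left. The main obstacle is purely bookkeeping: carefully checking that the chosen row ordering makes each block reference to $M_{\ell-1}$ and $P_{\ell-1}$ literally the same matrix, rather than merely similar up to a permutation, after which the matrix identity falls out by direct entry-wise comparison.
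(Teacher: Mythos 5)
Your proposal is correct and follows essentially the same route as the paper: order the subsets so that those not containing $\ell$ come first, then do casework on whether $\ell\in\mathcal{Q}$, observing that dropping $\ell$ shifts the convention $i_{m+1}$ from $\ell$ to $\ell+1$ and hence adds $1$ to the coefficient of the largest chosen level, which is exactly what adding $P_{\ell-1}$ records. Your write-up is in fact somewhat more careful than the paper's, which states the same casework but does not spell out the $i_{m+1}$ bookkeeping or verify the $P_\ell$ recursion separately.
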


\begin{proof}
The idea is that the rows of $P_{\ell}$ encode the largest element of their corresponding subsets - if we list the $2^{\ell-1}$ subsets of $\{1,2,\ldots,\ell\}$ in the usual ordering ($\{1\}$, $\{1,2\}$, $\{1,3\}$, $\{1,2,3\}$, $\{1,4\}$, \ldots), then $P_{i,j} = 1$ if $j$ is the largest element of the $i^{\text{th}}$ subset.

Then, given $M_{\ell-1}$ and $P_{\ell-1}$, we can construct $M_{\ell}$ by casework on whether $\ell \in \mathcal{Q}$ or not. If $\ell \not\in \mathcal{Q}$, then we build the first $2^{\ell-2}$ rows by using the previous matrix $M_{\ell-1}$, and adding 1 to the rightmost nonzero entry of each column (which is equivalent to adding $P_{\ell-1}$). If $\ell \in \mathcal{Q}$, we build the remaining $2^{\ell-2}$ rows by using the previous matrix $M_{\ell-1}$, and appending a $2^{\ell-2} \times 1$ column whose values are equal to~$\ell$.
\end{proof}

This recursively defined matrix is not central to the composite algorithm, but gives a nice way of formulating the LP above.

Solving the above LP directly is challenging for large $\ell$ due to its size. We instead use a column generation method. The idea is that solving the LP for only a subset of the constraints will produce an upper bound for the approximation ratio---larger values can be returned due to relaxing the constraints. Now, the objective would be to add ``effective'' constraints that would be most needed for getting a more accurate solution.

In our column generation, we only add one single constraint at a time. Let $\boldsymbol{Q}$ denote the set of all the constraints at the running step. Solving the LP provides a vector $\mathbf{y}$ and an upperbound for the approximation ratio $t$. Our goal is to find a new set $\mathcal{Q}_{new}=\{i_1,i_2,\dots,i_k\}$ that gives the smallest value of $\sum_{k=1}^{m} (i_{k+1}-1)y_{i_k}$ given the vector $\mathbf{y}$ from the current LP solver. We can use an ILP to find the set $\mathcal{Q}_{new}$. Specifically, we define indicator variables $\theta_{ij}$ so that $\theta_{ij}=1$ iff $i$ and $j$
are consecutive level choices for the new constraint $\mathcal{Q}_{new}$, and  $\theta_{ij}=0$ otherwise. For example, for $\mathcal{Q}_{new}=\{1,3,7\}$ with $\ell=10$ we must have $\theta_{1,3}=\theta_{3,7}=\theta_{7,11}=1$ and all other $\theta_{ij}$'s equal to zero.

 \begin{lemma}\label{Lemma: Qnew}
 Given a vector $\mathbf{y}=[y_1,\dots,y_{\ell}]$, the choice of $Q_{new}=\{i,\text{s.t., }\theta_{ij}=1\}$ from the following ILP minimizes $\sum_{k=1}^{m} (i_{k+1}-1)y_{i_k}$, where $i_k$ is the $k-$th smallest element of $Q$ and $i_{m+1}=\ell+1$.

\begin{align*}
    \min \sum_{i=1}^{\ell}\sum_{j=i+1}^{\ell+1} (j-1)\theta_{ij}y_i\\
    \text{ subject to } \hspace{5ex}
     \sum_{j>i} \theta_{ij} &\le 1,  & \sum_{i<j} \theta_{ij} &\le 1, & \forall i \in \{1,\dots,\ell\},j \in \{2,\dots,\ell+1\}\\
     \sum_{i<k} \theta_{ik} &=  \sum_{j>k} \theta_{kj} & & & \forall k\in\{2,\dots,\ell\}\\
     \sum_{1<j} \theta_{ij} &= 1, &
     \sum_{i<\ell+1}\theta_{i(\ell+1)}&= 1  \\
\end{align*}

\end{lemma}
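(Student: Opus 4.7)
The plan is to interpret the ILP as encoding a directed $1$-to-$(\ell+1)$ path in the acyclic digraph $D$ with vertex set $\{1,2,\dots,\ell+1\}$ and arc set $\{(i,j) : i<j\}$, and to show that such paths are in bijection with the admissible subsets $\mathcal{Q} \subseteq \{1,\dots,\ell\}$ containing $1$, in a way that preserves the target cost $\sum_{k=1}^m (i_{k+1}-1)y_{i_k}$.

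First I would verify that the linear constraints pin down exactly the incidence vectors of unit $1$-to-$(\ell+1)$ paths. The two degree inequalities $\sum_{j>i}\theta_{ij}\le 1$ and $\sum_{i<j}\theta_{ij}\le 1$ prevent any vertex from being entered or left more than once. The equalities $\sum_{i<k}\theta_{ik}=\sum_{j>k}\theta_{kj}$ for $k \in \{2,\dots,\ell\}$ impose flow conservation at intermediate vertices. The remaining two equalities force exactly one arc out of $1$ and exactly one arc into $\ell+1$. By a standard flow-decomposition argument, every $\{0,1\}$-feasible solution then corresponds to a single simple directed path $1=i_1<i_2<\dots<i_m<i_{m+1}=\ell+1$, with $\theta_{i_k,i_{k+1}}=1$ and all other $\theta$-variables zero.

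Second I would match the objective. On the path described above, the only positive variables are $\theta_{i_k,i_{k+1}}$ for $k=1,\dots,m$, so the objective collapses to $\sum_{k=1}^m (i_{k+1}-1)y_{i_k}$, which is exactly the Lemma~\ref{lemma:cmp} bound for $\mathcal{Q}=\{i_1,\dots,i_m\}$ with the stated convention $i_{m+1}=\ell+1$. Because the path-to-subset map is a cost-preserving bijection between feasible integer solutions and subsets $\mathcal{Q}\subseteq\{1,\dots,\ell\}$ containing~$1$, minimizing the ILP objective is identical to finding the $\mathcal{Q}_{new}$ that yields the tightest new constraint in the column-generation procedure.

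The main obstacle is the decomposition argument itself: one must rule out integer feasible $\theta$-assignments that describe a union of a path and extra components (spurious cycles or parallel chains), and it is the combination of the unit source/sink equalities, the conservation equalities, and the acyclicity of $D$ (arcs go strictly upward) that enforces this. Although not required for correctness, I would note in passing that the underlying constraint matrix is a network matrix and hence totally unimodular, so the LP relaxation is integral and $\mathcal{Q}_{new}$ can in fact be computed by a shortest-path algorithm on $D$ rather than by solving the ILP.
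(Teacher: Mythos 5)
Your proposal is correct and follows essentially the same route as the paper: both read the constraints as forcing a single chain $1=i_1<\dots<i_m<i_{m+1}=\ell+1$ of consecutive level choices (degree bounds, conservation at intermediate levels, and the forced unit out of level $1$ and into level $\ell+1$), under which the objective collapses to $\sum_{k=1}^m (i_{k+1}-1)y_{i_k}$. Your write-up is in fact more careful than the paper's — it explicitly rules out spurious components via flow decomposition and acyclicity, and the total-unimodularity/shortest-path remark is a nice addition the paper does not make.
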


\begin{proof}
Using the indicator variables, $\sum_{k=1}^{m} (i_{k+1}-1)y_{i_k}$ can be expressed as $\sum_{i=1}^{\ell}\sum_{j=i+1}^{\ell+1} (j-1)\theta_{ij}y_i$ because $\theta_{i_k,i_{k+1}}=1$ and other $\theta_{ij}$'s are zero. In the above formulation, the first constraint indicates that, for every given $i$ or $j$, at most one $\theta_{ij}$ is equal to one. The second constraint indicates that, for a given $k$, if $\theta_{ik}=1$ for some $i$, then there is also a $j$ such that $\theta_{kj}=1$. In other words, the result determines a proper choice of levels by ensuring continuity of $[i,j]$ intervals. The last constraints guarantees that levels $1$ is chosen for $\mathcal{Q}_{new}$.
\end{proof}

Lemma \ref{Lemma: Qnew} allows for a column generation technique to solve the LP program of Theorem \ref{theorem:composite-lp} for computing the approximation ratio of $\CMP$ algorithm. We initially start with an empty set $\mathcal{Q}=\{\}$ and $\mathbf{y}=[1,\frac{1}{2},\dots,\frac{1}{\ell}].$ Then using Lemma \ref{Lemma: Qnew}, we find $Q_{new}$ and add it to the constraint set $\mathcal{Q}$. We repeat the process until $Q_{new}$ already belongs to $\mathcal{Q}$. Without the column generation technique, we could only comfortably go up to $\ell=22$ levels, however, the new techniques allows us to solve for much larger values of $\ell$. In the following, we report the results up to 100 levels. 


\begin{theorem}\label{corollary:composite}
For any $~\ell=2,\dots,100$, the MLST returned by the composite algorithm yields a solution whose cost is no worse than $t_{\ell} \OPT$, where the values of $t_{\ell}$ are listed in Figure~\ref{fig:lvr}.
\end{theorem}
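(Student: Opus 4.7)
The plan is to reduce this corollary to Theorem~\ref{theorem:composite-lp}, which already establishes that the composite algorithm achieves approximation ratio $t_\ell$ equal to the optimum of the displayed LP. What remains is purely numerical: compute $t_\ell$ for each $\ell \in \{2,\dots,100\}$ and tabulate the values in Figure~\ref{fig:lvr}. The difficulty is that the LP has $2^{\ell-1}+2\ell$ constraints, so direct solvers cannot handle $\ell$ much beyond about $22$.

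To circumvent this, I would use the column generation scheme suggested just before the statement. First, initialize the constraint pool with $\mathcal{Q} = \emptyset$ and the feasible point $\mathbf{y} = (1, \tfrac{1}{2}, \ldots, \tfrac{1}{\ell})$ (which satisfies the monotonicity and simplex constraints). Then iterate the following two steps: (i) solve the restricted LP containing only the currently generated constraints to obtain a candidate optimum $(t, \mathbf{y})$; (ii) apply Lemma~\ref{Lemma: Qnew}, feeding the current $\mathbf{y}$ into the ILP to obtain a subset $\mathcal{Q}_{\mathrm{new}}$ that minimizes $\sum_{k} (i_{k+1}-1) y_{i_k}$. If $\mathcal{Q}_{\mathrm{new}}$ yields a value strictly less than $t$, its corresponding constraint is violated by $\mathbf{y}$; add it to the pool and repeat. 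Otherwise, stop.

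The main obstacle is arguing correctness and termination of this scheme. For termination, note that each iteration adds a new subset drawn from a finite family of size $2^{\ell-1}$, and we never revisit a subset because once added its constraint is enforced. For correctness, when the procedure halts with subset $\mathcal{Q}_{\mathrm{new}}$ already in the pool (equivalently, with $\sum_k (i_{k+1}-1)y_{i_k} \ge t$ for $\mathcal{Q}_{\mathrm{new}}$), Lemma~\ref{Lemma: Qnew} guarantees that the same inequality holds for \emph{every} subset $\mathcal{Q} \subseteq \{1,\dots,\ell\}$ with $1 \in \mathcal{Q}$. Therefore the current $(t,\mathbf{y})$ is feasible for the full LP, and since restricting constraints can only increase the maximum, the restricted LP optimum coincides with the full LP optimum $t_\ell$.

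Finally, I would execute this column generation procedure for each $\ell = 2, \ldots, 100$, using a standard LP solver for the restricted LP and a standard ILP solver for the subproblem of Lemma~\ref{Lemma: Qnew}, and record the resulting values $t_\ell$ in Figure~\ref{fig:lvr}. By Theorem~\ref{theorem:composite-lp}, each such $t_\ell$ is a valid approximation ratio for $\CMP$, completing the proof of the corollary.
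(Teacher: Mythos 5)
Your proposal is correct and follows essentially the same route as the paper: reduce to Theorem~\ref{theorem:composite-lp} and then compute $t_\ell$ numerically via the column generation scheme of Lemma~\ref{Lemma: Qnew}, using it as a separation oracle until no violated constraint remains. You actually make explicit the termination and optimality-at-termination arguments that the paper leaves implicit (the only nitpick being that the seed $\mathbf{y}=(1,\tfrac12,\dots,\tfrac1\ell)$ does not satisfy $\sum_i y_i=1$, which is immaterial since it is only used to generate the first cut).
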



\begin{figure}[tb]
\centering
\begin{subfigure}{.55\textwidth}
\includegraphics[width=0.9\textwidth]{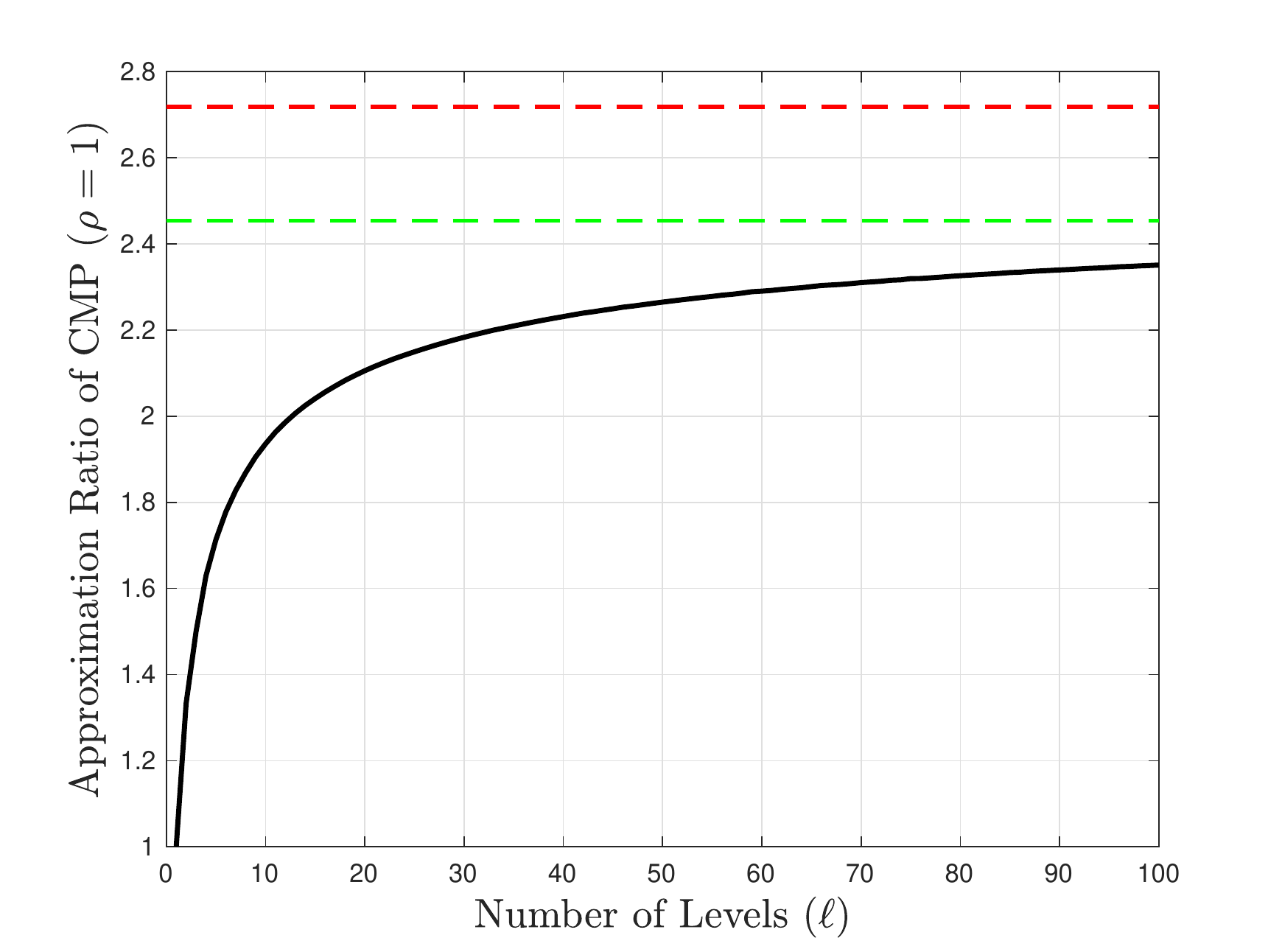}
\end{subfigure}\hfill
{\small 
\begin{subtable}{.2\textwidth}
\begin{tabular}{rr}
$\ell$ & $t_{\ell}$ \\\hline
1 & $\rho$ \\
2 & 1.333$\rho$\\
3 & 1.500$\rho$\\
4 & 1.630$\rho$\\
5 & 1.713$\rho$\\
6 & 1.778$\rho$\\
7 & 1.828$\rho$\\
8 & 1.869$\rho$\\
9 & 1.905$\rho$\\
10 & 1.936$\rho$\\
11 & 1.963$\rho$\\
\end{tabular}
\end{subtable}
\begin{subtable}{.2\textwidth}
\begin{tabular}{rr}
$\ell$ & $t_{\ell}$ \\\hline
12 & 1.986$\rho$\\
13 & 2.007$\rho$\\
14 & 2.025$\rho$\\
15 & 2.041$\rho$\\
16 & 2.056$\rho$\\
17 & 2.070$\rho$\\
18 & 2.083$\rho$\\
19 & 2.094$\rho$\\
20 & 2.106$\rho$\\
50 & 2.265$\rho$\\
100 & 2.351$\rho$\\
\end{tabular}
\end{subtable}
}
\caption{Approximation ratios for the composite algorithm for $\ell=1,\dots,100$ (black curve), compared to the ratio $t=e\rho$ (red dashed line) guaranteed by the algorithm of Charikar et al.~\cite{1288137} and $t=2.454\rho$ (green dashed line) guaranteed by the algorithm of Karpinski et al.~\cite{Karpinski2005}. The table to the right lists the exact values for the ratio $t/\rho$.}
\label{fig:lvr}
\end{figure}



Neglecting the factor~$\rho$ for now (i.e., assuming $\rho=1$), the approximation ratio $t=3/2$ for $\ell=3$ is slightly better than the ratio of ${(5+4\sqrt{2})}/{7} +\varepsilon \approx 1.522 +\varepsilon$ guaranteed by Xue et al.~\cite{Xue2001} for the Euclidean case. 
(The additive constant $\varepsilon$ in their ratio stems from using Arora's PTAS as a subroutine for Euclidean \STP, which corresponds to the multiplicative constant~$\rho$ for using an \STP algorithm as a subroutine for MLST.)
Recall that an improvement for $\ell=3$ was posed as an open problem by Karpinski et al.~\cite{Karpinski2005}. 
Also, for each of the cases $4 \leq \ell \leq 100$ 
our results in Theorem~\ref{corollary:composite} 
improve the approximation ratios of $e\rho\approx 2.718\rho$ and $2.454\rho$ guaranteed by Charikar et al.~\cite{1288137} and by Karpinski et al.~\cite{Karpinski2005}, respectively.
The graph of the approximation ratio of the composite algorithm (see
Figure~\ref{fig:lvr}) for $\ell=1,\dots,100$ suggests that it will stay
below $2.454\rho$ for values of~$\ell$ much larger than~$100$.

The obvious disadvantage is that computing $\CMP$ involves $2^{\ell-1}$ different heuristics, requiring $2^{\ell-2}(\ell+1)$ \STP computations, which is not feasible for large $\ell$. In the following, we show that we can achieve the same approximation guarantee with at most $2\ell$ \STP computations.  

\begin{theorem}
Given an instance of the MLST problem, a specific subset $\mathcal{Q}^* \subseteq \{1,2,\ldots,\ell\}$ (with $1 \in \mathcal{Q}^*$) can be found through $\ell$ \STP computations, such that running the composite heuristic on $\mathcal{Q}^*$ is a $t_{\ell}$-approximation.
\end{theorem}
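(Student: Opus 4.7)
The plan is to use $\ell$ Steiner tree computations to obtain $\rho$-approximate surrogates for $\MIN_1,\ldots,\MIN_\ell$, then to select $\mathcal{Q}^*$ by minimizing the Lemma~\ref{lemma:cmp} objective with these surrogates in place of the (NP-hard to compute) true $\MIN_i$'s. Specifically, for each $i\in\{1,\ldots,\ell\}$ I would invoke a $\rho$-approximate Steiner tree subroutine on $T_i$ to obtain a tree $\widehat{E}_i$ of cost $\widehat{c}_i=c(\widehat{E}_i)$, so that $\MIN_i \le \widehat{c}_i \le \rho\,\MIN_i$. This accounts for the $\ell$ \STP calls promised by the theorem.

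I would then pick $\mathcal{Q}^* = \{i_1 < \cdots < i_m\}$ with $i_1 = 1$ minimizing $\sum_{k=1}^{m}(i_{k+1}-1)\widehat{c}_{i_k}$ (with $i_{m+1} = \ell+1$), by a straightforward $O(\ell^2)$ dynamic program (e.g.\ $f(j)=\min_{i<j}\{f(i)+(j-1)\widehat{c}_i\}$ with $f(1)=0$, then take $\min_i\{f(i)+\ell\,\widehat{c}_i\}$), which uses no further \STP calls. Intuitively, $\widehat{c}_i$ serves as an accessible $\rho$-approximate surrogate for $\MIN_i$, and $\mathcal{Q}^*$ is the best choice we can make for minimizing the right-hand side of Lemma~\ref{lemma:cmp}.

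Finally I would run the composite heuristic on $\mathcal{Q}^*$, but feed in the pre-computed $\widehat{E}_{i_k}$ as the Steiner tree at each chosen level $i_k$ (each $\widehat{E}_{i_k}$ remains a valid Steiner tree for $T_{i_k}$ in every reweighted graph that arises during the procedure). The propagation accounting from the proof of Lemma~\ref{lemma:cmp} then yields
\[
\CMP(\mathcal{Q}^*) \le \sum_{k=1}^{m}(i_{k+1}-1)\widehat{c}_{i_k},
\]
because the edges of $\widehat{E}_{i_k}$ not already present in a higher chosen tree contribute at most $\widehat{c}_{i_k}$ of new cost and appear on at most $i_{k+1}-1$ output levels. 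Chaining the $\mathcal{Q}^*$-minimality with the bound $\widehat{c}_i \le \rho\,\MIN_i$, and then invoking the LP bound from the proof of Theorem~\ref{theorem:composite-lp}, gives
\[
\sum_{k}(i_{k+1}-1)\widehat{c}_{i_k} \le \rho\min_{\mathcal{Q}}\sum_{k}(i_{k+1}-1)\MIN_{i_k} \le t_\ell\cdot\OPT,
\]
which completes the proof.

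The main obstacle is avoiding an extra factor of $\rho$ in the last chain: if, at a chosen level $i_k$, one re-ran a fresh $\rho$-approximation inside the reweighted graph, the resulting tree could cost up to $\rho\,\widehat{c}_{i_k}$, yielding only a $\rho\,t_\ell$-approximation. Reusing the pre-computed $\widehat{E}_{i_k}$ instead forces the tight $\widehat{c}_{i_k}$ upper bound on the new-edge cost at no additional \STP cost, which is exactly what is needed to match the $t_\ell$ ratio guaranteed by the full composite algorithm of Theorem~\ref{theorem:composite-lp}.
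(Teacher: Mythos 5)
Your proposal is correct and follows essentially the same route as the paper: compute one Steiner tree per level, use the resulting cost vector to pick the subset $\mathcal{Q}^*$ minimizing the bound of Lemma~\ref{lemma:cmp}, and observe that this minimum is at most $t_\ell\cdot\OPT$ because the (normalized) vector of true minimum Steiner tree costs is feasible for the LP of Theorem~\ref{theorem:composite-lp}. The one place you go beyond the paper is the handling of $\rho$: by reusing the pre-computed trees $\widehat{E}_{i_k}$ inside the composite run, you avoid the extra factor of $\rho$ that would otherwise arise from selecting $\mathcal{Q}^*$ with $\rho$-approximate surrogates $\widehat{c}_i$ and then running fresh $\rho$-approximations in the reweighted graphs --- a subtlety the paper's proof glosses over by writing $\MIN_i = c(\STP(G,T_i))$ as if the subroutine were exact. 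Your $O(\ell^2)$ dynamic program for selecting $\mathcal{Q}^*$ is a clean substitute for the paper's choice of the minimizing row of $M_\ell$ (or its column-generation ILP), and as a bonus your variant needs only $\ell$ rather than $2\ell$ \STP computations in total.
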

\begin{proof}
Given a graph $G=(V,E)$ with cost function $c$, and terminal sets
$T_{\ell}\subset T_{\ell-1}\subset \dots \subset T_1\subseteq V$, compute a
Steiner tree on each level and set $\MIN_i=c(ST(G,T_i))$. Again, assume w.l.o.g. that $\sum_{i=1}^{\ell} \MIN_i = 1$, which can be done by scaling the edge weights appropriately after computing the minimum Steiner trees.

Since $\boldsymbol{y}=[\MIN_1,\dots,\MIN_{\ell}]^T$ and $t = \min M_{\ell} \boldsymbol{y}$ is a feasible, but not necessarily optimal, solution to the LP for computing the approximation ratio~$t_{\ell}$,
we have $\min M_{\ell} \boldsymbol{y} = t \le t_{\ell}$. Let $q \in \{1,2,\ldots,2^{\ell-1}\}$ be the row of $M_{\ell}$ whose dot product with $\boldsymbol{y}$ is minimized (i.e. equals $t$), and let $\mathcal{Q}^*$ be the subset of levels corresponding to the $q^{\text{th}}$ row of $M_{\ell}$, which can be obtained from the non-zero entries of the $q^{\text{th}}$ row. Then on this specific subset $\mathcal{Q}^{*}$ of levels, we have $\CMP(Q^{*}) \le t \le t_{\ell} \le t_{\ell} \OPT$. Then the optimal choice of $Q^{*}$ given $\boldsymbol{y}=[\MIN_1,\dots,\MIN_{\ell}]^T$ can be obtained very efficiently using the ILP of the column generation.
%
\end{proof}

Note that the number of \STP computations is reduced from $2^{\ell-2}(\ell+1)$ to at most $2\ell$. The solution with cost $\CMP(Q^{*})$ does not necessarily have the same cost as the solution with cost $\CMP$, however, the solution returned is still at most $t_{\ell}$ times the optimum. It is worth noting that the analyses in this section did not assume that the computed edge sets are trees, only that the edge sets are nested and that the cost of a solution is the sum of the costs over all levels.

\section{Integer Linear Programming (ILP) Formulations}
\label{section:exact_algorithm}
In this section, we discuss several different ILP formulations for the MLST problem.

\subsection{ILP Based on Cuts} \label{subsection:ilp1}
This is a standard ILP formulation for the (single-level) Steiner tree problem. Recall that $T \subseteq V$ is the set of terminals. Given a cut $S \subseteq V$, let $\delta(S) = \{uv \in E \mid u \in S, v \not\in S\}$ denote the set of all edges that have exactly one endpoint in $S$. Given an undirected edge $uv \in E$, let $x_{uv} = 1$ if $uv$ is present in the solution, 0 otherwise. An ILP formulation for \STP is as follows:

\begin{align*}
\text{Minimize } \sum \limits_{uv \in E} c(u,v) \cdot x_{uv} \text{ subject to} \hspace{3ex} \\
\sum_{uv \in \delta(S)} x_{uv} &\ge 1 & \forall \, S \subset V; S \cap T \neq \emptyset; S \cap T \neq T \\
0 \le x_{uv} &\le 1 & \forall uv \in E
\end{align*}

The cut-based formulation generalizes to $\ell$ levels naturally. Let $x_{uv}^i = 1$ if edge $uv$ is present on level $i$, and 0 otherwise. We constrain that the graph on level $i$ is a subgraph of the graph on level $i-1$ by requiring $x_{uv}^i \le x_{uv}^{i-1}$ for all $2 \le i \le \ell$ and for all $uv \in E$. Then a cut-based formulation for the MLST problem is as follows:

\begin{align*}
\text{Minimize } \sum \limits_{i=1}^{\ell} \sum \limits_{uv \in E} c(u,v) \cdot x_{uv}^i \text{ subject to} \hspace{3ex} \\
\sum_{uv \in \delta(S)} x_{uv}^i &\ge 1 & \forall \, S \in V; S \cap T \neq \emptyset,T; 1 \le i \le \ell \\
x_{uv}^i &\le x_{uv}^{i-1} & \forall uv \in E; 2 \le i \le \ell \\
0 \le x_{uv}^i &\le 1 & \forall uv \in E; 1 \le i \le \ell
\end{align*}

The number of variables is $O(\ell|E|)$, however, the number of constraints is $O(\ell \cdot 2^{|V|})$.


 \subsection{ILP Based on Multi-Commodity Flow} \label{subsection:ilp2}
We recall here the well-known undirected flow formulation for \STP ~\cite{Polzin2001, Balakrishnan1994}.
Let $s \in T$ be a fixed terminal node, the \emph{source}. Given an edge $uv \in E$, the indicator variable $x_{uv}$ equals 1 if the edge $uv$ is present in the solution and 0 otherwise. This formulation sends $|T|-1$ unit commodities from the source $s$ to each terminal in $T - \{s\}$. The variable $f_{uv}^p$ denotes the (integer) flow from $u$ to $v$ of commodity $p$. A multi-commodity ILP formulation for \STP is as follows:

\begin{align*}
    \text{Minimize } \sum \limits_{uv \in E} c(u,v) \cdot x_{uv} \text{ subject to} \hspace{3ex} \\
\sum \limits_{vw \in E} f_{vw}^p - \sum \limits_{uv \in E} f_{uv}^p & = 
\begin{cases}
    1       & \quad \text{if } v=s\\
    -1  & \quad \text{if } v=p \\
    0  & \quad \text{else}\\
  \end{cases} & \forall \, v \in V\\
0 \leq f_{uv}^p &\le 1 & \forall \, uv \in E \\
0 \leq f_{vu}^p &\le 1 & \forall \, uv \in E \\
0 \le x_{uv} &\le 1 & \forall \,uv \in E
\end{align*}
To generalize to the MLST problem, 
we add the linking constraints ($x_{uv}^i \le x_{uv}^{i-1}$ as before) and enforce the flow constraints on levels $1, \ldots, \ell$.

\subsection{ILP Based on Single-Commodity Flow} \label{subsection:ilp3}
\STP can also be formulated using a single commodity flow, instead of multiple commodities. Here, we will assume the input graph is directed, by replacing each edge $uv$ with directed edges $(u,v)$ and $(v,u)$ of the same cost. As before, $f_{uv}$ denotes the flow from $u$ to $v$:

\begin{align*}
\text{Minimize } \sum \limits_{(u,v) \in E} c(u,v) \cdot x_{uv} \text{ subject to} \hspace{3ex} \\
\sum \limits_{(v,w) \in E} f_{vw} - \sum \limits_{(u,v) \in E} f_{uv} & = 
\begin{cases}
    |T|-1       & \quad \text{if } v=s\\
    -1  & \quad \text{if } v \in T \setminus \{s\} \\
    0  & \quad \text{else}\\
  \end{cases} & \forall \, v \in V\\
0 \leq f_{uv} & \leq (|T|-1) \cdot x_{uv} & \forall \, (u,v) \in E \\
0 \le x_{uv} &\le 1 & \forall \,(u,v) \in E
\end{align*}
To generalize to multiple levels, we add the linking constraints $x_{uv}^i \ge x_{uv}^{i-1}$ as before. Let $f_{uv}^i$ denote the integer flow along edge $(u,v)$ on level $i$. Let $s \in T_{\ell}$ be a source terminal on the top level $T_{\ell}$. Then the MLST problem can be formulated using single-commodity flows on each level:
\begin{align*}
\text{Minimize } \sum \limits_{i=1}^{\ell} \sum \limits_{(u,v) \in E} c(u,v) x_{uv}^i \text{ subject to} \\
\sum \limits_{(v,w) \in E} f_{vw}^i - \sum \limits_{(u,v) \in E} f_{uv}^i & = 
\begin{cases}
    |T_i|-1       & \quad \text{if } v=s\\
    -1  & \quad \text{if } v \in T_i \setminus \{s\} \\
    0  & \quad \text{else}\\
  \end{cases} & \forall \, v \in V; 1 \le i \le \ell \\
x_{uv}^{i} & \ge x_{uv}^{i-1} & \forall \, (u,v) \in E; 2 \le i \le \ell \\
0 \leq f_{uv}^i & \leq (|T_i|-1) \cdot x_{uv}^i & \forall \, (u,v) \in E; 1 \le i \le \ell \\
0 \le x_{uv}^i &\le 1 & \forall \, (u,v) \in E; 1 \le i \le \ell
\end{align*}
The number of variables is $O(\ell |E|)$ and the number of constraints is $O(\ell (|E|+|V|))$. In Section~\ref{subsection:ilp4}, we reduce the number of variables and constraints to $O(|E|)$ and $O(|E|+|V|)$, respectively.

\subsection{A Smaller ILP Based on Single-Commodity Flow} \label{subsection:ilp4}
We can simplify the flow-based ILP in Section~\ref{subsection:ilp3} so that the number of variables is $O(|E|)$ and the number of constraints is $O(|E|+|V|)$. This is done by only enforcing the single-commodity flow on the bottom level. Let $L(v)$ denote the highest level that $v$ is a terminal in, i.e. if $v \in T_i$ and $v \not\in T_{i+1}$, then $L(v) = i$. If $v \not\in T_1$, then $L(v) = 0$. For each directed edge $(u,v) \in E$, let $x_{uv} = 1$ if $(u,v)$ appears on the bottom level (level 1) in the solution, and $x_{uv} = 0$ otherwise. Let $y_{uv}$ denote the highest level that $(u,v)$ appears in, i.e. $y_{uv} = i$ if $(u,v)$ is present on level $i$ but not on level $i+1$, and $y_{uv} = 0$ if $(u,v)$ is not present anywhere. The variables $y_{uv}$ indicate the number of times we pay the cost of edge $(u,v)$ in the solution. Let $N(v) = \{u \in V \mid (u,v) \in E\}$ denote the neighborhood of $v$. As in Section \ref{subsection:ilp3}, let $f_{uv}$ denote the flow along directed edge $(u,v)$, and let $s \in T_{\ell}$ be the source. A reduced ILP formulation is as follows:

\begin{align}
\text{Minimize } \sum_{uv \in E} c(u,v) (y_{uv} + y_{vu}) \text{ subject to} \\
\sum_{(v,w) \in E} f_{vw} - \sum_{(u,v) \in E} f_{uv} &= \begin{cases} |T_1| - 1 & v = s \\
-1 & v \in T_1 - \{s \} \\
0 & \text{ otherwise}
\end{cases} & \forall v \in V \label{eqn:ilp4-2} \\
0 \le f_{uv} &\le (|T_1| - 1)x_{uv} & \forall uv \in E \label{eqn:ilp4-3} \\
\sum_{u \in N(v)} x_{uv} &\le 1 &\forall v \in V \label{eqn:ilp4-4} \\
x_{uv} \le y_{uv} &\le \ell x_{uv} & \forall uv \in E  \label{eqn:ilp4-5} \\
\sum_{u \in N(v) - \{w\}} x_{uv} &\ge x_{vw} & \forall vw \in E, v \neq s \label{eqn:ilp4-6} \\
\sum_{u \in N(v) - \{w\}} y_{uv} &\ge y_{vw} & \forall vw \in E, v \neq s \label{eqn:ilp4-7} \\
\sum_{u \in N(v)} y_{uv} &\ge L(v) & \forall v \in T_1 - \{s\} \label{eqn:ilp4-8} \\
0 \le x_{uv} &\le 1 & \forall uv \in E \label{eqn:ilp4-9}
\end{align}

Constraints (\ref{eqn:ilp4-2}) and (\ref{eqn:ilp4-3}) are the same as before, but we only enforce the flow constraint on the bottom level (level 1). Constraint (\ref{eqn:ilp4-4}) enforces that each vertex has at most one incoming edge in the solution. 

Constraint (\ref{eqn:ilp4-5}) ensures that if $x_{uv} = 0$, then we do not incur cost for edge $(u,v)$, and so $y_{uv} = 0$. Otherwise, if $x_{uv} = 1$, then we pay for edge $(u,v)$ between 1 and $\ell$ times, or $1 \le y_{uv} \le \ell$.

Constraint (\ref{eqn:ilp4-6}) ensures that if $(v,w)$ is an edge in the solution, and $v$ is not the root, then there is some $u \neq w$ for which $(u,v)$ is an edge in the solution. Combined with (\ref{eqn:ilp4-4}), this implies that $v$ has exactly one incoming edge, and it is not the edge $(w,v)$.

Constraint (\ref{eqn:ilp4-7}) ensures that if edge $(v,w)$ appears on levels $1, \ldots, i$ (i.e. $y_{uv} = i$), and $v$ is not the root, then the sum over all neighbors $u$ of $v$ (other than $w$) of $y_{uv}$ is at least $i$. As $v$ has exactly one incoming edge $(u,v)$ and $u \neq w$, constraint (\ref{eqn:ilp4-7}) combined with (\ref{eqn:ilp4-4}) and (\ref{eqn:ilp4-5}) together imply that $v$ has exactly one incoming edge, and its level is greater than or equal to $y_{vw}$.

Constraint (\ref{eqn:ilp4-8}) ensures that if $v$ is any terminal besides the root, then the sum over all neighbors $u$ of $y_{uv}$ is at least its level, $L(v)$. Combined with (\ref{eqn:ilp4-4}) and (\ref{eqn:ilp4-5}), this implies that every non-root terminal has exactly one incoming edge that appears on at least $L(v)$ levels.

Constraint (\ref{eqn:ilp4-9}) ensures that all $x_{uv}$ variables are binary, which implies $0 \le y_{uv} \le \ell$ for all variables $y_{uv}$. Given a solution to the above ILP, the graph $G_1$ is such that $uv \in G_1$ if $x_{uv} = 1$ or $x_{vu} = 1$. More generally, $uv \in G_i$ if $y_{uv} \ge i$ or $y_{vu} \ge i$.

\begin{lemma} \label{lemma:ilp4-1}
In the \emph{optimal} solution to the above ILP, the graph $G_1 = (V,E_1)$ with $E_1 = \{uv \in E \mid x_{uv} = 1 \text{ or } x_{vu} = 1\}$ is a Steiner tree spanning all terminals $T_1$.
\end{lemma}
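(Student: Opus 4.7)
The plan is to verify three properties of $G_1$: that it contains $T_1$, that it is connected, and that it is acyclic as an undirected graph. Together these make $G_1$ a Steiner tree on $T_1$.

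For containment and connectivity I would invoke the flow constraints. By (\ref{eqn:ilp4-2}) the source $s$ emits exactly $|T_1|-1$ units of flow, each absorbed at some terminal of $T_1\setminus\{s\}$, while (\ref{eqn:ilp4-3}) confines flow to arcs with $x_{uv}=1$, i.e.\ to $E_1$. Hence every terminal is the endpoint of a directed $s$-path in $G_1$, which already yields containment $T_1\subseteq V(G_1)$ and connectivity of $G_1$.

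For acyclicity I would argue by optimality. Let $D$ be the digraph with arcs $\{(u,v):x_{uv}=1\}$. Constraint (\ref{eqn:ilp4-4}) forces every in-degree in $D$ to be at most one, so any undirected cycle in $G_1$ must in fact be a directed cycle $C$ in $D$ (otherwise two arcs would enter some vertex of the undirected cycle). I claim that the presence of such a $C$ contradicts optimality. If $s\notin C$, each vertex of $C$ already uses its one allowed incoming arc inside $C$, so no external arc can enter $C$, and by (\ref{eqn:ilp4-3}) no flow reaches $C$; in particular $C$ contains no terminal of $T_1\setminus\{s\}$. I would then zero out $x$, $y$, $f$ on one arc of $C$ and cascade around $C$ via (\ref{eqn:ilp4-7}): once a vertex $v$ on $C$ loses its incoming arc, the inequality $\sum_{u\in N(v)\setminus\{w\}}y_{uv}\ge y_{vw}$ forces its outgoing $y_{vw}$, and hence $x_{vw}$, to be zero as well. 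By (\ref{eqn:ilp4-5}) every removed arc contributed at least $c(u,v)>0$ to the objective, so the new solution is feasible and strictly cheaper, contradicting optimality. If instead $s\in C$, I would delete the unique arc $(u,s)$ incoming to $s$ together with a circulation of value $f_{u,s}$ around $C$, reducing $f$ by this amount on every cycle arc; conservation is preserved at every non-source vertex, and the net outflow at $s$ required by (\ref{eqn:ilp4-2}) is untouched. Because constraints (\ref{eqn:ilp4-6}) and (\ref{eqn:ilp4-7}) are required only at $v\neq s$, the modification is feasible, and the cost again drops by at least $c(u,s)>0$.

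The main obstacle will be the rigorous cycle-cancellation in the $s\in C$ case, where $x$, $y$, and $f$ must be adjusted simultaneously on several arcs while keeping (\ref{eqn:ilp4-2})--(\ref{eqn:ilp4-8}) all satisfied. The key observation that makes this go through is the standard flow decomposition: any feasible flow in the network equals a sum of $s$-to-terminal path flows and circulations, so canceling precisely the circulation supported on $C$ leaves every terminal's demand met. Combining the three established properties, $G_1$ is connected, spans $T_1$, and is acyclic, hence a Steiner tree on $T_1$.
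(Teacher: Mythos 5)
Your proof follows essentially the same route as the paper's: constraint (\ref{eqn:ilp4-4}) turns any undirected cycle of $G_1$ into a directed one with no entering arcs, optimality then kills cycles through the same case analysis (a cycle through $s$ loses its arc into $s$ together with a circulation of that arc's flow value; a cycle avoiding $s$ can contain no terminal, since no flow can enter it, and is deleted outright), and flow along arcs with $x_{uv}=1$ links $s$ to every terminal. Your treatment of the deletion step is, if anything, slightly more careful than the paper's, since you note that constraints (\ref{eqn:ilp4-6}) and (\ref{eqn:ilp4-7}) force the removal to cascade around the cycle.

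The one step that does not hold as written is the claim that reachability of the terminals from $s$ ``already yields \dots\ connectivity of $G_1$.'' It yields only that all terminals lie in the component of $s$; it does not exclude further connected components of $E_1$ that are disjoint from the terminals and carry no flow, and such a component (an in-forest, by (\ref{eqn:ilp4-4}) and your acyclicity argument) would make $G_1$ a forest rather than a tree. The paper closes this with one more appeal to optimality: any such extra component can be deleted, with its (zero) flows, strictly decreasing the objective. Alternatively, you can get it from constraints you already use: by (\ref{eqn:ilp4-6}), every vertex $v\neq s$ that is the tail of an arc has an incoming arc, so following incoming arcs backwards from any arc of $E_1$ must, by acyclicity, terminate at $s$, and hence every edge of $E_1$ lies in the component of $s$. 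Either patch is one sentence, so this is a small omission rather than a flaw in the approach.
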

\begin{proof}
We show that (i) $G_1$ contains no cycle, (ii) there exists a path in $G_1$ from the source $s$ to every terminal $v \in T_1$, and (iii) $G_1$ is connected.

\begin{enumerate}[label=(\roman*)]
\item Assume otherwise that $G_1$ contains a cycle $C$. Such a cycle contains directed edges oriented in the same direction, otherwise, there exist vertices $u$, $v$, $w$ along the cycle such that $x_{uv} = x_{wv} = 1$, which violates (\ref{eqn:ilp4-4}). Additionally, such a cycle cannot have any ``incoming'' edges (edges $(u,v)$ where $u \not\in C$ and $v \in C$), as this violates (\ref{eqn:ilp4-4}) as well.

If $C$ contains $s$, then removing its preceding edge $(v,s)$ and reducing all flows on $C$ by $f_{vs}$ yields a feasible solution with lower cost, contradicting optimality. If $C$ does not contain $s$, but contains some terminal $v \in T_1$, then since there are no incoming edges into $C$, we cannot satisfy the flow constraint on $v$. If $C$ contains no terminal, then removing $C$ along with its edge flows gives a solution with lower cost.

\item Consider an arbitrary terminal $v \in T_1$ with $v \neq s$. Then using previous arguments, $v$ has exactly one incoming edge $(u,v)$ (whose level is at least $L(v)$). If $u = s$, we are done. Otherwise, we continue this process until we reach the source $s$. Note that continuing this process does not revisit a vertex, as $G_1$ contains no cycle.

\item Since there exists a directed path from $s$ to each terminal $v \in T_1$, then all terminals are in the same connected component in $G_1$. If there exist other connected components in $G_1$, then removing them and setting flows to zero yields a solution with lower cost.
\end{enumerate}
\end{proof}

\begin{lemma} \label{lemma:ilp4-2}
In the \emph{optimal} solution to the above ILP, the graph $G_i = (V, E_i)$ with $E_i = \{uv \in E \mid y_{uv} \ge i \text{ or } y_{vu} \ge i\}$ is a Steiner tree spanning all terminals $T_i$.
\end{lemma}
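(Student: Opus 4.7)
The plan is to mirror the three-part structure in the proof of Lemma \ref{lemma:ilp4-1} and establish that $G_i$ is (i) acyclic, (ii) contains an $s$-to-$v$ path for every $v \in T_i$, and (iii) has no superfluous components.

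Part (i) is immediate. If $y_{uv} \ge i \ge 1$, then by constraint (\ref{eqn:ilp4-5}) combined with integrality of $x_{uv}$ from (\ref{eqn:ilp4-9}), we must have $x_{uv} = 1$. Thus $E_i \subseteq E_1$, and since $G_1$ is a tree by Lemma \ref{lemma:ilp4-1}, $G_i$ is acyclic.

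For part (ii), fix $v \in T_i \setminus \{s\}$ and let $s = u_0, u_1, \ldots, u_k = v$ be the unique $s$-to-$v$ path in $G_1$, oriented so that each $(u_j, u_{j+1})$ is the incoming tree edge of $u_{j+1}$. I will prove by downward induction on $j$ that $y_{u_j, u_{j+1}} \ge i$. For the base case $j = k-1$: Lemma \ref{lemma:ilp4-1} gives that $u_{k-1}$ is $v$'s unique in-neighbor with $x_{u_{k-1}, v} = 1$, so $x_{u, v} = 0$ (and hence $y_{u,v} = 0$ by (\ref{eqn:ilp4-5})) for every other $u \in N(v)$; constraint (\ref{eqn:ilp4-8}) then collapses to $y_{u_{k-1}, v} \ge L(v) \ge i$. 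For the inductive step, assume $y_{u_j, u_{j+1}} \ge i$ with $u_j \neq s$, and apply constraint (\ref{eqn:ilp4-7}) at $u_j$ with $w := u_{j+1}$. The sum $\sum_{u \in N(u_j) \setminus \{u_{j+1}\}} y_{u, u_j}$ collapses to $y_{u_{j-1}, u_j}$, since $u_{j-1}$ is $u_j$'s unique tree parent (and $u_{j-1} \neq u_{j+1}$ because $G_1$ is acyclic). This yields $y_{u_{j-1}, u_j} \ge y_{u_j, u_{j+1}} \ge i$, closing the induction. Every edge of the path from $s$ to $v$ therefore lies in $E_i$, so $T_i$ is contained in the connected component of $s$ within $G_i$.

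Part (iii) is where I expect the main subtlety. Steps (i) and (ii) already show that the connected component of $s$ in $G_i$ is a tree spanning $T_i$; what remains is to rule out any additional edges in $E_i$, by the same optimality argument used in Lemma \ref{lemma:ilp4-1}. If $E_i$ contained an edge $(u,v)$ outside the minimal $s$-to-$T_i$ subtree, then by leaf-stripping in $G_1$ one can locate an edge with $y_{uv} \ge i$ whose $y$-value can be strictly decreased without violating (\ref{eqn:ilp4-5}), (\ref{eqn:ilp4-7}), or (\ref{eqn:ilp4-8})---the only constraints in which $y_{uv}$ appears. Such a decrease strictly lowers the objective, contradicting optimality. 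The delicate part is verifying that this coordinated reduction---possibly along an entire extraneous sub-branch, processed from the leaves up---preserves (\ref{eqn:ilp4-7}), which couples $y$-values on adjacent tree edges. Once this bookkeeping is done, $G_i$ coincides with the $s$-to-$T_i$ subtree of $G_1$, which is a Steiner tree spanning $T_i$.
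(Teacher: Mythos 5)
Your parts (i) and (ii) follow essentially the same route as the paper's proof: the paper notes $G_i \subseteq G_1$ via (\ref{eqn:ilp4-5}), invokes (\ref{eqn:ilp4-8}) at each terminal $v \in T_i$ to obtain an incoming edge with $y_{uv} \ge L(v) \ge i$, and then states that ``applying similar arguments using constraint (\ref{eqn:ilp4-7})'' carries this bound back to the root --- your downward induction along the unique $G_1$-path, using (\ref{eqn:ilp4-4}) and (\ref{eqn:ilp4-5}) to collapse the sums to the single parent edge, is exactly that propagation made explicit, and it is correct. Your part (iii) goes beyond the paper: the published proof stops once every terminal of $T_i$ is connected to $s$ and never explicitly rules out extra edges of $E_i$ lying off the $s$-to-$T_i$ subtree (which is what would make $G_i$ a disconnected forest rather than a tree); your leaf-stripping optimality argument is the right idea for closing that gap, but as you yourself flag, the bookkeeping for preserving (\ref{eqn:ilp4-7}) under the coordinated decrease is only sketched, so this is the one place where your write-up is not yet a finished proof --- though it is also the one place where the paper offers nothing at all.
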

\begin{proof}
The graph $G_i$ is necessarily a subgraph of $G_1$, since $y_{uv} \ge i$ or $y_{vu} \ge i$ implies $x_{uv} \ge 1$ or $x_{vu} \ge 1$ by (\ref{eqn:ilp4-5}). Consider some terminal $v \in T_i$, $v \neq s$. By constraint (\ref{eqn:ilp4-8}), there is exactly one incoming edge $(u,v)$ to $v$ such that $y_{uv} \ge L(v)$. Applying similar arguments using constraint (\ref{eqn:ilp4-7}),  we will eventually reach the root via a path, all of whose edges appear at least $L(v)$ times.
\end{proof}

\begin{theorem}
The optimal solution to the above ILP with cost $\OPT_{ILP}$ yields the optimal MLST solution.
\end{theorem}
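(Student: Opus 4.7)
The plan is to prove $\OPT_{ILP} = \OPT$ via two matching inequalities and then conclude using Lemmas~\ref{lemma:ilp4-1} and~\ref{lemma:ilp4-2}. For the direction $\OPT \le \OPT_{ILP}$, I would take any optimum ILP solution $(x,y,f)$ and invoke Lemmas~\ref{lemma:ilp4-1} and~\ref{lemma:ilp4-2} to obtain nested Steiner trees $G_\ell \subseteq \cdots \subseteq G_1$ spanning $T_\ell, \dots, T_1$; the nesting $G_i \subseteq G_{i-1}$ is immediate because $uv \in G_i$ requires $y_{uv} \ge i$ or $y_{vu} \ge i$, which implies $uv \in G_{i-1}$. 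I would then verify that the ILP objective equals the MLST cost of this extracted solution. A short case analysis using (\ref{eqn:ilp4-4}) and (\ref{eqn:ilp4-6}) shows that in \emph{any} feasible solution no undirected edge can carry both $x_{uv}=1$ and $x_{vu}=1$: if both held, then constraint (\ref{eqn:ilp4-6}) applied at $v$ with $w=u$ (or at $u$ with $w=v$ when $v=s$) would force a second incoming edge at $v$ (resp.\ $u$), contradicting (\ref{eqn:ilp4-4}). Hence for every undirected edge $uv$ at most one of $y_{uv}, y_{vu}$ is positive, so $y_{uv}+y_{vu}$ equals the highest level at which $uv$ appears, and the objective $\sum_{uv\in E}c(u,v)(y_{uv}+y_{vu})$ coincides with $\sum_{i=1}^\ell c(G_i)$, giving $\OPT \le \OPT_{ILP}$.

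For the reverse inequality $\OPT_{ILP}\le \OPT$, I would construct a feasible ILP solution of cost $\OPT$ from any optimum MLST $E_\ell \subseteq \cdots \subseteq E_1$. Root $E_1$ at $s \in T_\ell$, orient each edge of $E_1$ away from $s$, and for every resulting parent-to-child edge $(u,v)$ set $x_{uv}=1$ and $y_{uv}=\max\{i : uv\in E_i\}$; set all reverse-oriented copies and all non-tree variables to $0$. For the single-commodity flow, let $f_{uv}$ equal the number of level-1 terminals in the subtree below $v$; this satisfies (\ref{eqn:ilp4-2}) because $s$ emits $|T_1|-1$ units and each non-source terminal absorbs exactly one unit, and it satisfies (\ref{eqn:ilp4-3}) because $f_{uv}\le |T_1|-1$ whenever $x_{uv}=1$. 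Constraints (\ref{eqn:ilp4-4}), (\ref{eqn:ilp4-5}), and (\ref{eqn:ilp4-6}) are routine consequences of the tree structure of $E_1$ and the bound $1\le y_{uv}\le \ell$ on tree edges. The objective then evaluates to $\sum_{uv \in E_1} c(uv)\cdot L(uv) = \sum_{i=1}^{\ell} c(E_i) = \OPT$.

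The main technical obstacle is the verification of the ``propagation'' constraints (\ref{eqn:ilp4-7}) and (\ref{eqn:ilp4-8}) in the construction above. For (\ref{eqn:ilp4-7}), I must show that the parent edge of every non-root $v$ has $y$-value at least $y_{vw}$ for each child edge $(v,w)$. This follows from nestedness: if $vw \in E_i$, then connectivity of $E_i$ together with $E_i \subseteq E_1$ forces the unique $s$-to-$w$ path in $E_1$ to lie entirely in $E_i$, so in particular the parent edge of $v$ lies in $E_i$ and hence has $y$-value $\ge i$. For (\ref{eqn:ilp4-8}), if $v \in T_1\setminus\{s\}$ has level $L(v)$, then $v$ must be connected to $s$ in $E_{L(v)}$, which forces the unique parent edge of $v$ to appear at level $\ge L(v)$. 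Once these two monotonicity facts are established, the two inequalities combine to give $\OPT_{ILP}=\OPT$, and Lemmas~\ref{lemma:ilp4-1} and~\ref{lemma:ilp4-2} then confirm that extracting the trees $G_1,\dots,G_\ell$ from any optimum ILP solution produces an optimal MLST.
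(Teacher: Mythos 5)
Your proposal is correct and follows the same two-inequality strategy as the paper: construct a feasible ILP solution from the optimal MLST to get $\OPT_{ILP}\le\OPT$, and use Lemmas~\ref{lemma:ilp4-1} and~\ref{lemma:ilp4-2} to extract a feasible MLST from the optimal ILP solution to get $\OPT\le\OPT_{ILP}$. The paper states both directions in two sentences without detail, whereas you explicitly verify the constraints of the constructed solution and check that at most one of $y_{uv},y_{vu}$ is positive so the objective is not double-counted --- useful bookkeeping that the paper leaves implicit, but not a different argument.
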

\begin{proof}
The optimal MLST solution whose cost is $\OPT$ is a feasible solution to the ILP, as we can set the $x_{uv}$, $y_{uv}$, and $f_{uv}$ variables accordingly, so $\OPT_{ILP} \le \OPT$. By Lemmas \ref{lemma:ilp4-1} and \ref{lemma:ilp4-2}, the optimal solution $\OPT_{ILP}$ gives a feasible solution to the MLST problem, so $\OPT \le \OPT_{ILP}$. Then $\OPT = \OPT_{ILP}$.
\end{proof}

The number of flow variables is $2|E|$ (where $|E|$ is the number of edges in the input graph), and the total number of variables is $O(|E|)$. The number of flow constraints is $O(|V|)$ and the total number of constraints is $O(|V|+|E|)$. Additionally, the integrality constraints on the flow variables $f_{uv}$ as well as the variables $y_{uv}$ may be dropped without affecting the optimal solution.


\section{Experimental Results}
\label{section:experimental_results}


\myparagraph{Graph Data Synthesis.} 
The graphs we used in our experiment are synthesized from graph
generation models. In particular, we used three random network
generation models: Erd\H{o}s--Renyi (ER) \cite{erdos1959random},
Watts--Strogatz (WS) \cite{watts1998collective}, and Barab\'{a}si--Albert (BA) \cite{barabasi1999emergence}. These networks are very well studied in the literature \cite{newman2003structure}.

The Erd\H{o}s--Renyi model, $\textsc{ER}(n,p)$, assigns an edge to every possible pair among $n = |V|$ vertices with probability $p$, independently of other edges. It is well-known that an instance of $\textsc{ER}(n,p)$ with $p=(1+\varepsilon)\frac{\ln n}{n}$ is almost surely connected for $\varepsilon>0$ \cite{erdos1959random}. For our experiment, $n = 50, 100, 150, \ldots, 500$, and $\varepsilon = 1$.


The Watts--Strogatz model, $\textsc{WS}(n,K,\beta)$, initially creates a ring lattice of constant degree $K$, and then rewires each edge with probability $0\leq \beta \leq 1$ while avoiding self-loops or duplicate edges. Interestingly, the Watts--Strogatz model generates graphs that have the small-world property while having high clustering coefficient \cite{watts1998collective}. In our experiment, the values of $K$ and $\beta$ are equal to $6$ and $0.2$ respectively.

The Barabási–Albert model, $\textsc{BA}(m_0,m)$, uses a preferential attachment mechanism to generate a growing scale-free network. The model starts with a graph of $m_0$ vertices. Then, each new vertex connects to $m\leq m_0$ existing nodes with probability proportional to its instantaneous degree. The BA model generates networks with power-law degree distribution, i.e. few vertices become hubs with extremely large degree \cite{barabasi1999emergence}. This model is a network growth model. In our experiments, we let the network grow until a desired network size $n$ is attained. We vary $m_0$ from $10$ to $100$ in our experiments. We keep the value of $m$ equal to $5$.

For each generation model, we generate graphs on size $|V| = 50, 100, 150, \ldots, 500$. On each graph instance, we assign integer edge weights $c(e)$ randomly and uniformly between 1 and 10 inclusive. We only consider connected graphs in our experiment. Computational challenges of solving an ILP limit the size of the graphs to a few hundred in practice.

\myparagraph{Selection of Levels and Terminals.} For each generated graph, we generated MLST instances with $\ell = 2, 3, 4, 5$ levels.
We adopted two strategies for selecting the terminals on the $\ell$ levels: \emph{linear} vs. \emph{exponential}.
In the linear case, we select the terminals on each level by randomly sampling $\lfloor {|V| \cdot (\ell-i+1)}/(\ell+1) \rfloor$ vertices on level $i$ so that the size of the terminal sets decreases linearly. As the terminal sets are nested, $T_i$ can be selected by sampling from $T_{i-1}$ (or from $V$ if $i=1$).
In the exponential case, we select the terminals on each level by randomly sampling $\lfloor {|V|}/{2^{\ell-i+1}} \rfloor$ vertices so that the size of the terminal sets decreases exponentially by a factor of 2.

To summarize, a single instance of an input to the MLST problem is characterized by four parameters: network generation model 
$\textsc{NGM}\in\{\textsc{ER,WS,BA}\}$, number of vertices $|V|$, number of levels $\ell$, and the terminal selection method $\textsc{TSM}\in\{\textsc{Linear,Exponential}\}$. Since each instance of the experiment setup involves randomness at different steps, we generated 5 instances for every choice of parameters (e.g., WS, $|V| = 100$, $\ell=5$, \textsc{Linear}).

\myparagraph{Algorithms and Outputs.}
We implemented the bottom-up, top-down, and composite heuristics described in Section~\ref{section:approximation_algorithms}.

For evaluating the heuristics, we also implemented all ILPs described in Section~\ref{section:exact_algorithm} using CPLEX 12.6.2 as an ILP solver. The ILP described in Section~\ref{subsection:ilp4} works very well in practice. Hence, we have used this ILP for our experiment.
The model of the HPC system we used for our experiment is Lenovo NeXtScale nx360 M5. It is a distributed system; the models of the processors in this HPC are Xeon Haswell E5-2695 Dual 14-core and Xeon Broadwell E5-2695 Dual 14-core. The speed of a processor is 2.3 GHz. There are 400 nodes each having 28 cores. Each node has 192 GB memory. The operating system is CentOS 6.10.




For each instance of the MLST problem, we compute the costs of the MLST from the
ILP solution (\OPT), the bottom-up solution (\BOT), the top-down
solution (\TOP), the composite heuristic (\CMP), and the guaranteed
performance heuristic ($\CMP(\mathcal{Q^*})$ where $\mathcal{Q}^{*}$ is chosen suitably).
The four heuristics involve a (single-level) \STP subroutine; we used both the 2-approximation algorithm of Gilbert and Pollak~\cite{Gilbert1968}, as well as the flow formulation described in Section~\ref{subsection:ilp4} which solves \STP optimally. The purpose of this is to assess whether solving \STP optimally significantly improves the approximation ratio.

After completing the experiment, we compared the results of the heuristics with exact solutions. 
We show the performance ratio for each heuristic (defined as the heuristic cost divided by \OPT), and how the ratio depends on the experiment parameters (number of levels $\ell$, terminal selection method, number of vertices $|V|$).
We record the number of \STP computations involved for the guaranteed performance heuristic ($\CMP(\mathcal{Q}^{*})$) (note that this equals $\ell + |\mathcal{Q}^{*}|$). Finally, we discuss the running time of the ILP we have used in our experiment. All box plots shown below show the minimum, interquartile range (IQR) and maximum, aggregated over all instances using the parameter being compared.


\myparagraph{Results.}
%
We found that the four heuristics perform very well in practice using the 2-approximation algorithm as a (single-level) \STP subroutine, and that using an exact \STP subroutine did not significantly improve performance. Hence, we only discuss the results that use the 2-approximation as a subroutine.

Figure~\ref{BoxPlots_L} shows the performance of the four heuristics compared with the optimal solution as a function of $\ell$.
As expected, the performance of the heuristics gets slightly worse as $\ell$ increases. The bottom-up approach had the worst performance, while the composite heuristic performed very well in practice.

 Figure~\ref{BoxPlots_ND} shows the performance of the four heuristics compared with the optimal solution as a function of terminal selection, either \textsc{linear} or \textsc{exponential}.
 Overall, the heuristics performed slightly worse when the sizes of the terminal sets decrease exponentially.

\begin{figure}
\begin{minipage}{\textwidth}
    \centering
    \begin{subfigure}[b]{0.32\textwidth}
        \includegraphics[width=\textwidth]{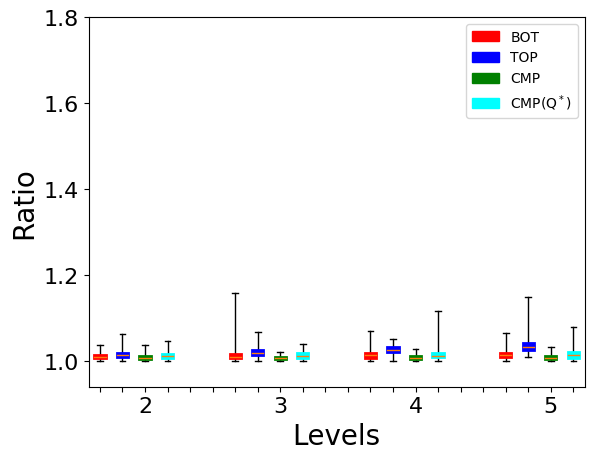}
        \caption{Barab{\'a}si--Albert}
    \end{subfigure}
    ~ 
    \begin{subfigure}[b]{0.32\textwidth}
        \includegraphics[width=\textwidth]{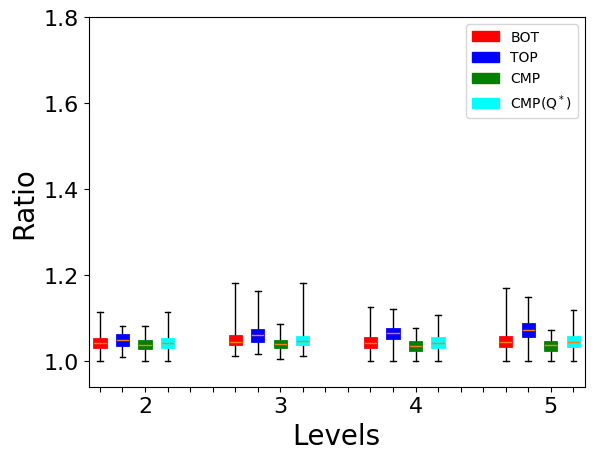}
        \caption{Erd\H{o}s--R{\'e}nyi}
    \end{subfigure}
    ~
    \begin{subfigure}[b]{0.32\textwidth}
        \includegraphics[width=\textwidth]{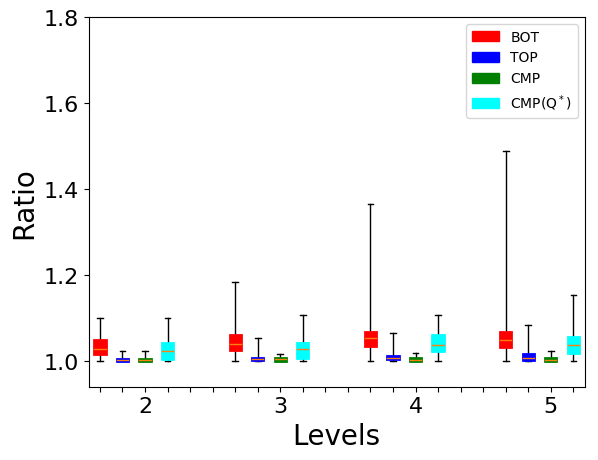}
        \caption{Watts--Strogatz}
    \end{subfigure}
    \caption{Performance of \BOT, \TOP, \CMP, and \CMPS w.r.t.\ the number~$\ell$ of levels using the 2-approximation for \STP as a subroutine.} 
    \label{BoxPlots_L}
\end{minipage}

\vspace{4ex}

\begin{minipage}{\textwidth}
    \centering
    \begin{subfigure}[b]{0.32\textwidth}
        \includegraphics[width=\textwidth]{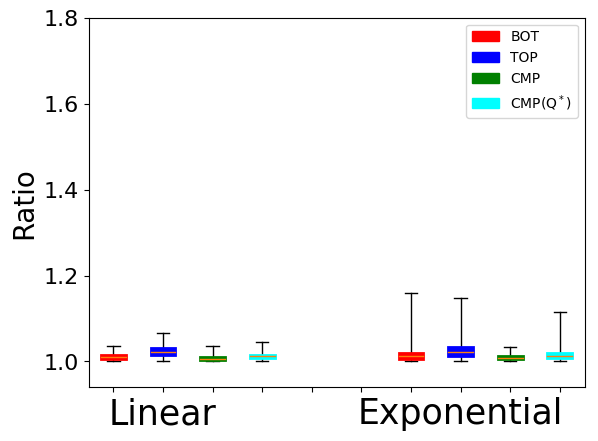}
        \caption{Barab{\'a}si--Albert}
    \end{subfigure}
    ~ 
    \begin{subfigure}[b]{0.32\textwidth}
        \includegraphics[width=\textwidth]{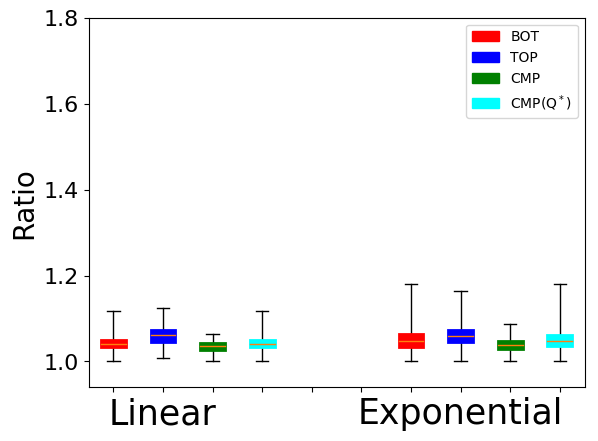}
        \caption{Erd\H{o}s--R{\'e}nyi}
    \end{subfigure}
    ~
    \begin{subfigure}[b]{0.32\textwidth}
        \includegraphics[width=\textwidth]{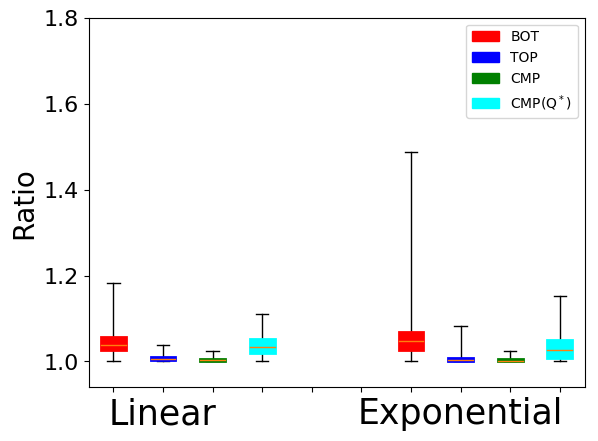}
        \caption{Watts--Strogatz}
    \end{subfigure}
    \caption{Performance of \BOT, \TOP, \CMP, and \CMPS w.r.t.\ the terminal selection method using the 2-approximation for \STP as a subroutine.} 
    \label{BoxPlots_ND}
\end{minipage}
\end{figure}

Figures ~\ref{ER_N} through \ref{BA_N} show the performance of the heuristics compared with the optimal solution, as a function of the number of vertices $|V|$.
The minimum, average, and maximum values for ``Ratio'' are aggregated over all instances of $|V|$ vertices (terminal selection, number of levels $\ell$, 5 instances for each). Due to space, we omit the bottom-up (BU) heuristic here, which tends to be comparable or slightly worse in performance than the top-down (TD) heuristic. Again, the composite (CMP) has the best ratio as it selects the best over all $2^{\ell-1}$ possible solutions; top-down and \CMPS were comparable.



\begin{figure}
\begin{minipage}{\textwidth}
    \centering
    \begin{subfigure}[b]{0.31\textwidth}
        \includegraphics[width=\textwidth]{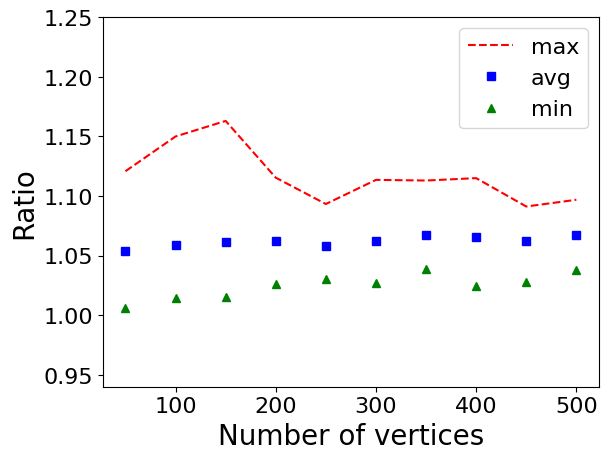}
        \caption{Top-down}
    \end{subfigure}
    ~
    \begin{subfigure}[b]{0.31\textwidth}
        \includegraphics[width=\textwidth]{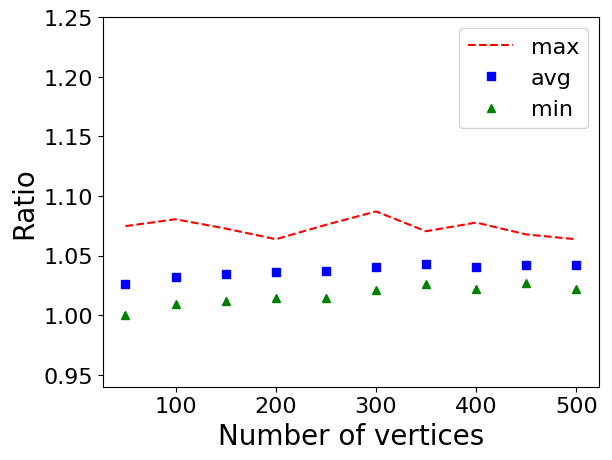}
        \caption{Composite}
    \end{subfigure}
    ~
    \begin{subfigure}[b]{0.31\textwidth}
        \includegraphics[width=\textwidth]{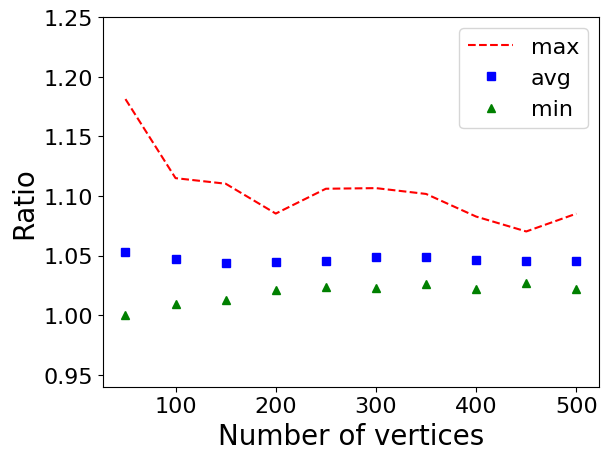}
        \caption{\CMPS}
    \end{subfigure}

    \caption{Performance of \TOP, \CMP, and \CMPS on Erd\H{o}s--R{\'e}nyi graphs using the 2-approximation for \STP as a subroutine.} 
    \label{ER_N}
\end{minipage}

\vspace{5ex}

\begin{minipage}{\textwidth}
    \centering
    \begin{subfigure}[b]{0.31\textwidth}
        \includegraphics[width=\textwidth]{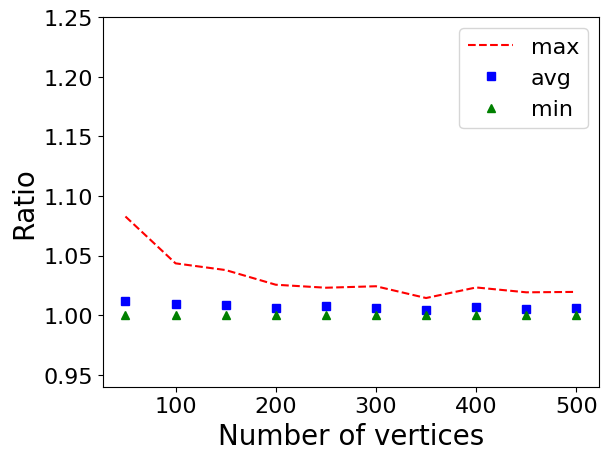}
        \caption{Top-down}
    \end{subfigure}
    ~
    \begin{subfigure}[b]{0.31\textwidth}
        \includegraphics[width=\textwidth]{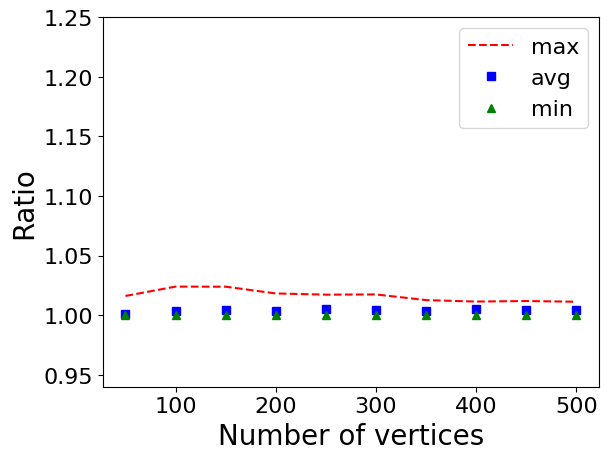}
        \caption{Composite}
    \end{subfigure}
    ~
    \begin{subfigure}[b]{0.31\textwidth}
        \includegraphics[width=\textwidth]{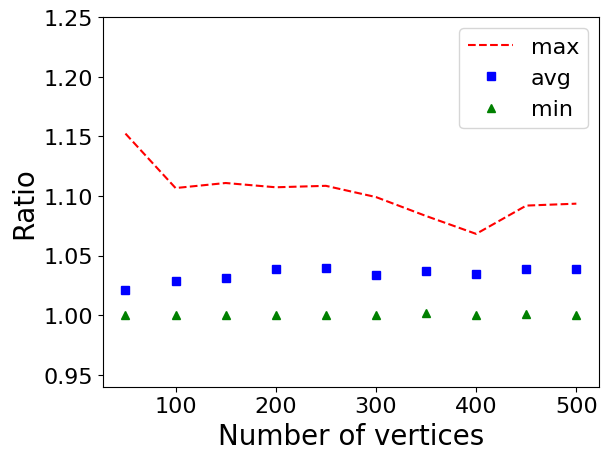}
        \caption{\CMPS}
    \end{subfigure}

    \caption{Performance of \TOP, \CMP, and \CMPS on Watts--Strogatz graphs.
    } 
    \label{WS_N}
\end{minipage}

\vspace{5ex}

\begin{minipage}{\textwidth}
    \centering
    \begin{subfigure}[b]{0.31\textwidth}
        \includegraphics[width=\textwidth]{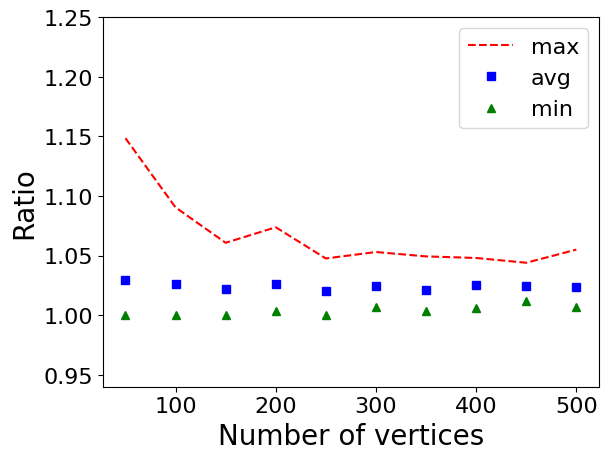}
        \caption{Top-down}
    \end{subfigure}
    ~
    \begin{subfigure}[b]{0.31\textwidth}
        \includegraphics[width=\textwidth]{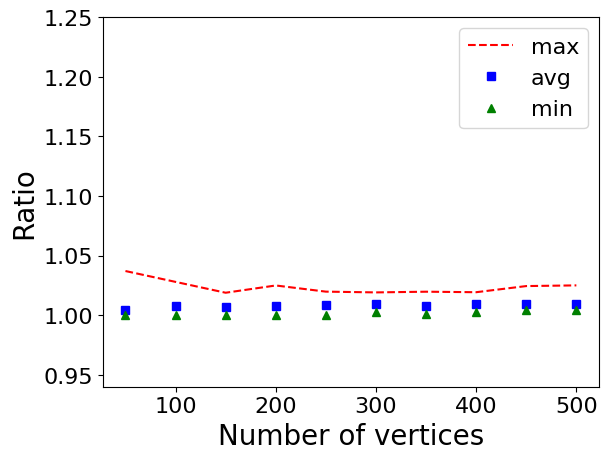}
        \caption{Composite}
    \end{subfigure}
	~
    \begin{subfigure}[b]{0.31\textwidth}
        \includegraphics[width=\textwidth]{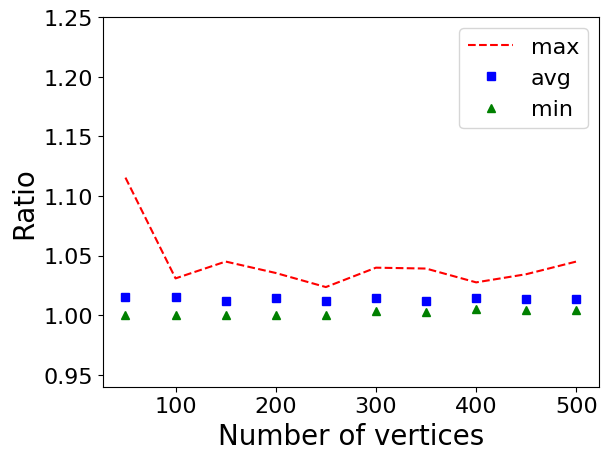}
        \caption{\CMPS}
    \end{subfigure}

\caption{Performance of \TOP, \CMP, and \CMPS on Barab{\'a}si--Albert graphs using the 2-approximation for \STP as a subroutine.} 
    \label{BA_N}
\end{minipage}
\end{figure}

The most time consuming part of this experiment was calculating the exact solutions of all MLST instances. It took 88.64 hours to compute all exact solutions. The computation time for network models ER, WS, and BA were 73.8, 7.84 and 7 hours respectively. Figure ~\ref{BoxPlots_LVT} shows the time taken to compute the exact solutions (with cost \OPT), as a function of the number of levels $\ell$.
As expected, the running time of the heuristics gets worse as $\ell$ increases. Note that the $y$-axes of the graphs in these figures have different scales for different network models. The Erd\H{o}s--R{\'e}nyi network model had the highest running time in the worst case. 

Figure ~\ref{BoxPlots_NDVT} shows the time taken to compute the exact solutions, as a function of the terminal selection method, either \textsc{linear} or \textsc{exponential}. Overall, the running times are slightly worse when the size of the terminal sets decreases exponentially, especially in the worst case.

\begin{figure}
\begin{minipage}{\textwidth}
    \centering
    \begin{subfigure}[b]{0.31\textwidth}
        \includegraphics[width=\textwidth]{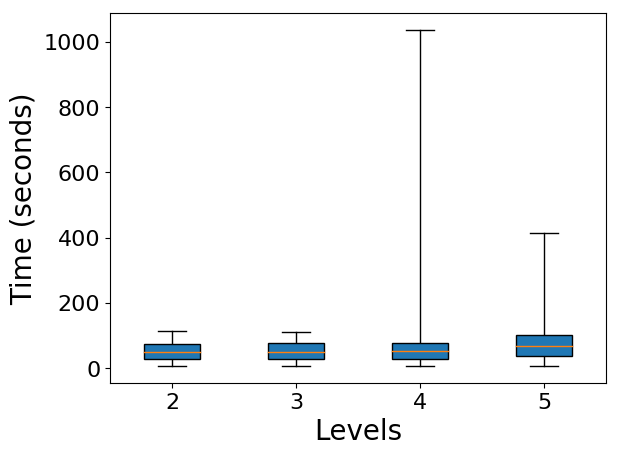}
        \caption{Barab{\'a}si--Albert}
    \end{subfigure}
    ~ 
    \begin{subfigure}[b]{0.31\textwidth}
        \includegraphics[width=\textwidth]{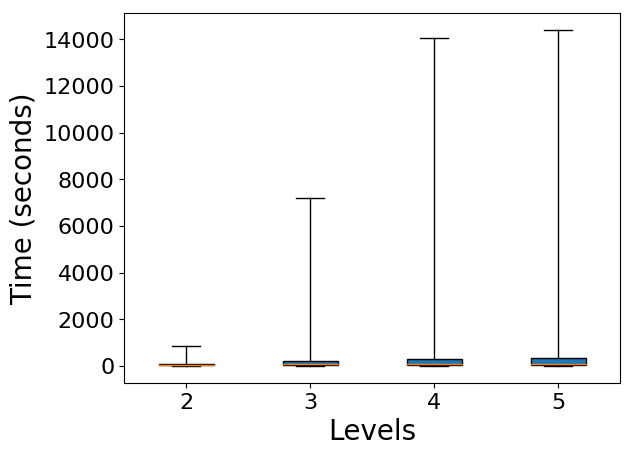}
        \caption{Erd\H{o}s--R{\'e}nyi}
    \end{subfigure}
    ~
    \begin{subfigure}[b]{0.31\textwidth}
        \includegraphics[width=\textwidth]{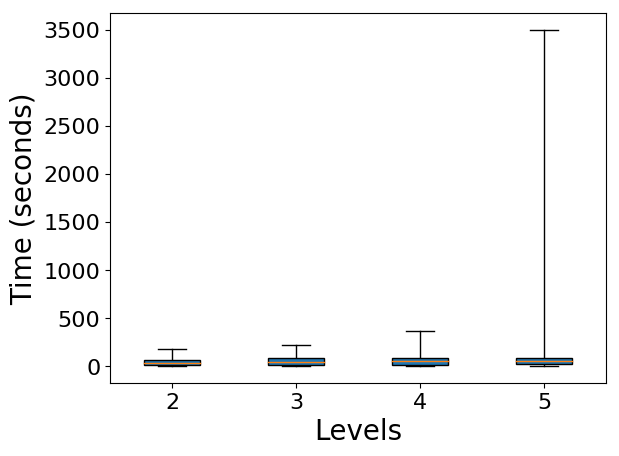}
        \caption{Watts--Strogatz}
    \end{subfigure}
    \caption{Experimental running times for computing exact solutions w.r.t.\ the number~$\ell$ of levels, aggregated over all instances with $\ell$ levels.} 
    \label{BoxPlots_LVT}
\end{minipage}

\vspace{4ex}

\begin{minipage}{\textwidth}
    \centering
    \begin{subfigure}[b]{0.31\textwidth}
        \includegraphics[width=\textwidth]{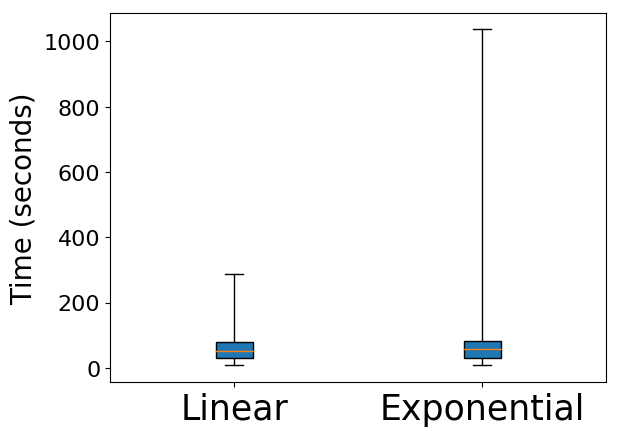}
        \caption{Barab{\'a}si--Albert}
    \end{subfigure}
    ~ 
    \begin{subfigure}[b]{0.31\textwidth}
        \includegraphics[width=\textwidth]{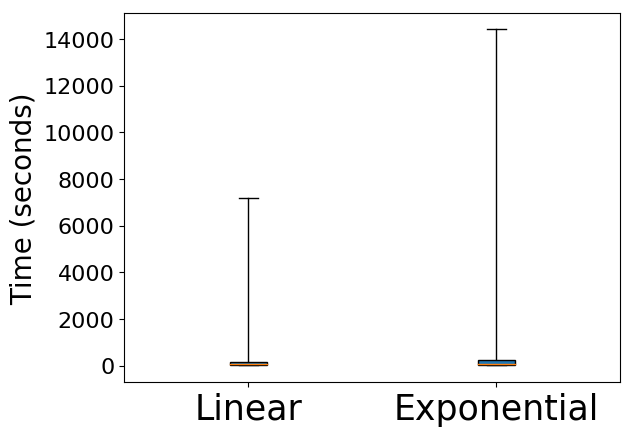}
        \caption{Erd\H{o}s--R{\'e}nyi}
    \end{subfigure}
    ~
    \begin{subfigure}[b]{0.31\textwidth}
        \includegraphics[width=\textwidth]{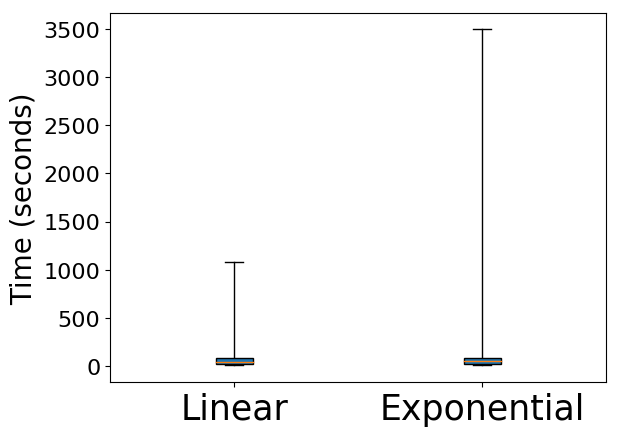}
        \caption{Watts--Strogatz}
    \end{subfigure}
    \caption{Experimental running times for computing exact solutions w.r.t.\ the terminal selection method, aggregated over all instances with \textsc{Linear} or \textsc{Exponential} terminal selection.} 
    \label{BoxPlots_NDVT}
\end{minipage}
\end{figure}

Figure ~\ref{BoxPlots_NVT} shows the time taken to compute the exact solutions, for each of the network models BA, ER, WS, as a function of the number of vertices $|V|$. Since several instances share the same network size, we show minimum, mean, and maximum values. Note that the $y$-axes of the graphs in these figures have different scales for different network models.
As expected, the running time slightly deteriorated as $|V|$ increased, especially in the worst case.

\begin{figure}
    \centering
    \begin{subfigure}[b]{0.31\textwidth}
        \includegraphics[width=\textwidth]{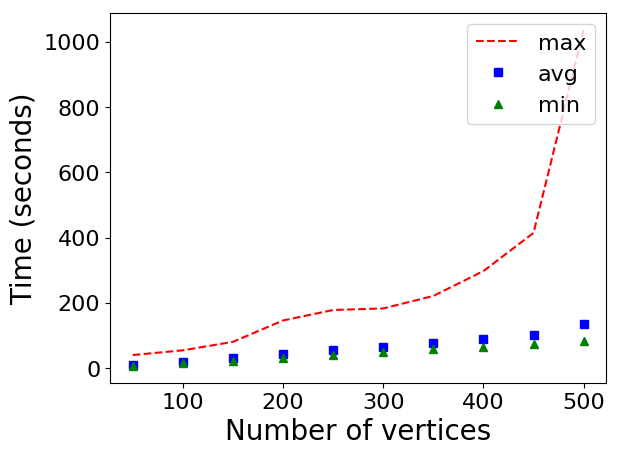}
        \caption{Barab{\'a}si--Albert}
    \end{subfigure}
    ~ 
    \begin{subfigure}[b]{0.31\textwidth}
        \includegraphics[width=\textwidth]{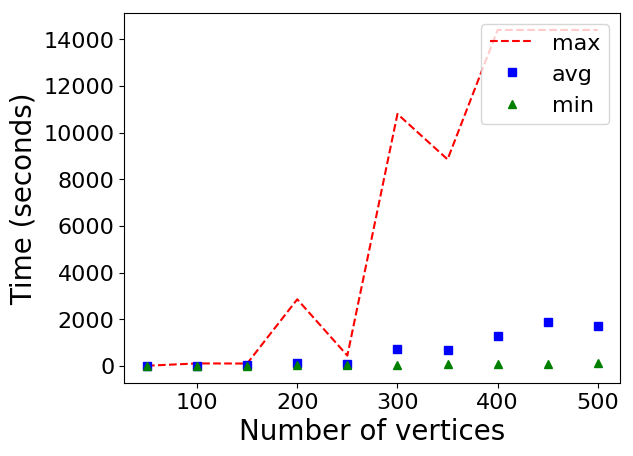}
        \caption{Erd\H{o}s--R{\'e}nyi}
    \end{subfigure}
    ~
    \begin{subfigure}[b]{0.31\textwidth}
        \includegraphics[width=\textwidth]{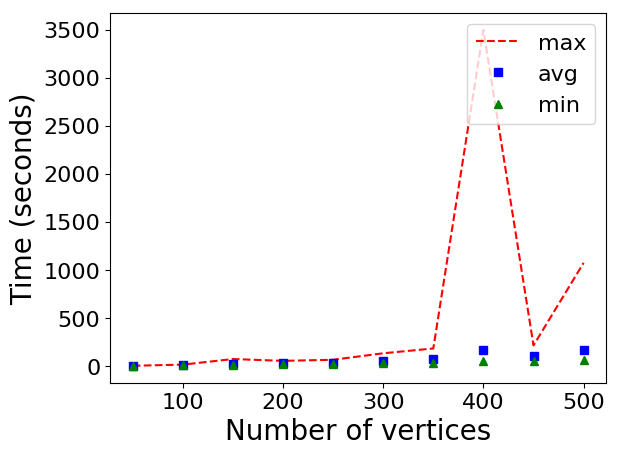}
        \caption{Watts--Strogatz}
    \end{subfigure}
    \caption{Experimental running times for computing exact solutions w.r.t.\ the graph size~$|V|$, aggregated over all instances of $|V|$ vertices.}    \label{BoxPlots_NVT}
\end{figure}%

\section{Conclusions}
\label{section:conclusions}
We presented several heuristics for the MLST problem and analyzed them both theoretically and experimentally. All the software (new heusritcs, approximation algorithms, ILP solvers, experimental data and analysis) are available online \url{ https://github.com/abureyanahmed/multi_level_steiner_ trees}.

The heuristics in this article rely on single level \STP subroutines. The composite heuristic \CMP achieves the best approximation ratio, as it is the minimum of all possible combinations of single level \STP computations. Importantly, we showed that $\CMP(Q^*)$ guarantees the same approximation ratio that \CMP can provide, using at most $2\ell$ \STP computations, rather than $O(2^{\ell-1}\ell)$.
One important question is to consider whether it is possible to directly approximate the MLST problem, without the use of multiple single level \STP subroutines, and whether it is possible to do better than the \CMP approximation ratio. Further, it is natural to study whether there are stronger inapproximability results for the MLST problem, compared to the standard \STP problem.

Another interesting open problem is whether the approximation ratios $t_{\ell}$ (Section~\ref{subsection:composite}) are tight for any $\ell$, and whether the output $\mathbf{y}$ from the LP formulation can help in designing worst-case examples. In particular, even though we have computed the approximation ratio for up to $\ell=100$ levels, it remains to determine the limit $\lim_{\ell\to \infty} t_{\ell}$.

As a final remark, even though our investigation was only focused on the MLST problem, much of the analysis does not depend on the fact that we computed Steiner trees, but only that the computed graphs were nested. We thus wonder whether it is possible to generalize our results to other ``sparsifiers'' (e.g., node-weighted Steiner trees, graph $t$-spanners).

\clearpage

\bibliographystyle{ACM-Reference-Format}
\bibliography{mlst}

\newpage
\section*{Revisions With Respect to Conference Version (SEA 2018)}
We have made the following changes:
\begin{itemize}

\item We added the cut-based ILP formulation (Section \ref{subsection:ilp1}), multi-commodity flow based ILP (section \ref{subsection:ilp2}), as well as a new simplification of the single-commodity ILP (section \ref{subsection:ilp4}) which reduces the number of variables and constraints by a factor of $\ell$. We also added proofs of correctness for the new ILP formulation.

\item The new ILP formulation allowed us to perform experiments with graphs of up to $|V|=500$ vertices. Previously we only tested graphs of up to 100 vertices.

\item All the software (new heusritcs, approximation algorithms, ILP solvers, experimental data and analysis) are available online \url{ https://github.com/abureyanahmed/multi_level_steiner_ trees}.

\item More thoroughly explained the composite heuristic (Section \ref{subsection:composite}), added Lemma \ref{lemma:cmp}, and provided a better treatment of Theorem \ref{theorem:composite-lp} with additional discussion.

\item Derived how to compute an approximation ratio for $\text{CMP}(\mathcal{Q})$ for any $\mathcal{Q} \subseteq \{1,2,\ldots,\ell\}$ in section \ref{subsection:composite}. This yields an alternate proof of the $\frac{\ell+1}{2}$, $\ell$, and 4-approximations for top-down, bottom-up, and the QoS algorithms.

\item Found an approximation ratio for $\CMP$ using column-generation techniques that allows us to compute the exact approximation ratio in Figure \ref{fig:lvr} up to $\ell=100$, which is much larger than any practical need. Previously, we were only able to compute values up to $\ell=22$ levels.

\item Included experimental running times for computing OPT in Figures \ref{BoxPlots_LVT}, \ref{BoxPlots_NDVT}, \ref{BoxPlots_NVT}. These show min, average, and max running times depending on number of levels $\ell$, terminal selection method, and number of vertices $|V|$.

\item Experimented with the use of an exact Steiner tree subroutine and the use of the 2-approximation algorithm in the proposed algorithms. 
We found that the four algorithms perform very well in practice with the 2-approximation algorithm as a (single-level) \STP subroutine, and that using an exact \STP subroutine does not noticeably improve the performance. Hence, we only discuss the results that use the 2-approximation algorithm.

\item We expanded and improved the Conclusions section, discussing several open problems.

\item We changed the notation for the number of levels from $k$ to $\ell$ and renumber them 1, \ldots, $\ell$ with level $\ell$ on top (previously $k$, \ldots, 1 with 1 on top). This is more intuitive as it clearly indicates that an edge on level $i$ (but not $i+1$) pays $i$ times its cost (rather than $k+1-i$ times as before).

\item Clarified and improved several proofs, most notably the TD and BU algorithms, and added new figures.

\end{itemize}

\end{document}